%% file: infovalab.tex
\documentclass[12pt]{article}
\usepackage{amssymb}
\usepackage{amsfonts}

\makeatletter
\input{abcommonheader}
\makeatother

\newtheorem{assumption}{Assumption}

\newtheorem{proposition}{Proposition}

\makeatletter
\newif\if@anonymize

\@anonymizefalse  %

\if@anonymize
    \newcommand{\anonymize}[1]{~}
    \renewcommand{\unfootnote}[1]{}
    \hypersetup{pdftitle={The Value of Information: A Puzzle},
        linkcolor=blue!50!black,citecolor=blue!50!black}

\else
    \newcommand{\anonymize}[1]{#1}
    \hypersetup{pdftitle={The Value of Information: A Puzzle},
        pdfauthor={Ohad Kadan and Asaf Manela},
        linkcolor=blue!50!black,citecolor=blue!50!black}
\fi

\usepackage[margin=1in]{geometry}
\makeatother
\begin{document}

\title{The Value of Information: A Puzzle
\unfootnote{We thank Yakov Amihud, Julie Fu, Nicolae G{\^a}rleanu, Xing Huang, P{\'e}ter Kondor, Tyler Muir, Christopher Schwarz, Taisiya Sikorskaya, Laura Veldkamp, S. "Vish" Viswanathan (discussant), seminar participants at LBS, LSE, Reichman University, and Washington University in St. Louis, and participants at the 2026 ASU-HKU Conference and 2026 FTG Bridging Theory and Empirical Research in Finance Conference for helpful comments. Claude Opus 5 and GPT 5 provided research assistance. 
E-mail: ohad.kadan@asu.edu or amanela@wustl.edu.}}
\author{\anonymize{Ohad Kadan \\ \normalsize{\textit{Arizona State University}}  \and  Asaf Manela\\ \normalsize{\textit{Washington University in St. Louis}}}}
\date{\normalsize August 2026}
\maketitle

\begin{abstract}

We show that the total value of information to informed traders can be measured by the covariation between price changes and order flow. This covariation captures noise trader losses, which equal informed trader gains when market making is competitive. We estimate the value of information using high frequency data on US equities at \$\EVwannualUSDm~million per year. The aggregate value of information is \meaninfovalpctofmarketcap\% of market cap, considerably lower than the \meanincrcostactivepctfrenchsample\% in fees investors pay searching for superior returns according to \textcite{french_presidential_2008}, which we extend through 2024. We discuss potential resolutions for these puzzling findings.

\bigskip
~\\
JEL classification: G10, G12, G14, D80, D82
\bigskip
~\\
Keywords: value of information, active asset management fees, liquidity, order flow
\end{abstract}
\medskip

\setcounter{page}{0} 
\thispagestyle{empty}  
\newgeometry{tmargin=1.5in,bmargin=1.5in,lmargin=1in,rmargin=1in}

\clearpage

\section{Introduction}

How much would you pay for information about tomorrow's closing stock price? What may at first sound like an arbitrage opportunity with infinite value, is in fact bounded in most markets due to price impact. Informed investors with foresight of a future stock price would act strategically and gradually trade to shade their superior information between the orders of other uninformed traders \parencite{kyle_continuous_1985}. The expected gain of the informed, which we call the value of information, provides an upper bound on the value added by active asset managers who promise, for a fee, to beat the market using their superior information. In what follows, we estimate the aggregate value of information for all stocks traded in the US and for all potentially informed investors combined. We find that it is surprisingly small compared to the aggregate amount of fees paid by investors to active asset managers \parencite[e.g.,][]{french_presidential_2008}.

Our approach to estimating the value of information relies on mild assumptions, ones satisfied by a broad class of dynamic models on strategic informed trading. These models include the \textcite{kyle_continuous_1985}, \textcite{back_insider_1992} and \textcite{back_information_2004} models of monopolistic informed traders, the \textcite{back_imperfect_2000} generalization to multiple informed investors with imperfect signals, and the \textcite{collin-dufresne_insider_2016} extension to stochastic noise trading. In such models, informed investors gain at the expense of noise or liquidity traders who lose, while competitive market makers break even on average. Noise trader losses on average equal informed trader gains when market making is competitive, and provide an upper bound on the value of information when market makers have market power \parencite{treynor_only_1971}. Rather than a hypothetical signal, mean noise trader losses measure the value of actual information in the market using the revealed preferences of market makers.

While measuring the average gains of the informed is often complicated, measuring the average losses of noise traders is surprisingly tractable. Noise traders lose every time they submit a market buy order $dZ$ and the price increases $dP$ due to price impact (or a bid-ask spread) set by market makers. They also lose if they sell and price decreases. It has long been known, therefore, that the total value of information, $\Omega$, equals noise traders' expected losses over the trading period $T$,
\begin{equation}
	\Omega=E\int_0^T dP dZ,\label{eq:EdPdZ}
\end{equation}
which is simply the covariation (formally the expected quadratic covariation) between price changes and noise trades.
Price changes $dP$ are clearly observable. However, the literature has so far struggled with the measurement of noise trading $dZ$ \parencite{manela_value_2014, peress_noise_2021}. If market makers cannot distinguish between informed trades and noise trades, how could an econometrician possibly do so?

Our key insight is that observing noise trading is unnecessary for measuring the covariation in Equation \eqref{eq:EdPdZ}. The reason is that in this class of models, the informed investor's trading strategy is locally orthogonal to price changes. 
Informed traders try to disguise their trades to avoid a strong price impact by smoothing their trades over time. As a result, informed trading $dX$ is ``smooth,'' while noise trading $dZ$ is ``jittery.''
Consequently, the covariation between price changes and changes in noise trading exactly equals the covariation between price changes and order flow $dY=dX+dZ$, as long as it is measured at a high enough frequency. This, in turn, implies that the total value of information \eqref{eq:EdPdZ} takes the form
\begin{equation}
	\Omega=E\int_0^T dP dY.\label{eq:EdPdY}
\end{equation}
Thus, measuring order flow $dY$, which is observable, suffices for quantifying the aggregate losses of uninformed noise traders, and therefore to measure the value of information.

Armed with this simple expression, we estimate the value of information daily with a discrete version of Equation \eqref{eq:EdPdY}, using
intraday data for US publicly-listed common stocks over the \sampleperiod%
~period. The idea is to measure an investor's willingness to pay for information on the average stock on the average trading day.
We find that the average daily value of
information is \$\EVwdailyUSDapprox, which annualizes to an average of 
\$\EVwannualUSDm~million for becoming informed on an average stock each day. 

The value of information we estimate is surprisingly small. The aggregate annualized value of information for all US stocks combined is only \meaninfovalpctofmarketcap\% of market capitalization. This value is the most active investors and asset managers can hope to gain by seeking superior information to forecast stock prices. 
For comparison, \textcite{french_presidential_2008} estimates that investors pay active asset managers about \meanincrcostactivepctfrenchsample\% of market capitalization in fees each year from 1980 to 2006 searching for superior returns. 
We extend French's sample and find that during 2003--2024, this cost declines modestly to \meanincrcostactivepctvoisample\% of market capitalization, which is still an order of magnitude larger than our estimate of the aggregate value of information. As \textcite{garleanu_efficiently_2018} shows, in a general equilibrium for both assets and asset management, the value of information equals management fees plus the cost of searching for good managers. We find it puzzling therefore that investors would pay so much more in fees than active asset management can possibly generate.

We initially met this finding with skepticism. We therefore devote considerable space to validation and robustness tests. In the time-series, we find that the value of information rises during turbulent times consistent with \textcite{kadan_liquidity_2025}. In the cross-section, we find that the value of information is higher for large-growth stocks, consistent with \textcite{farboodi_where_2021}, and higher for momentum stocks, consistent with \textcite{kadan_liquidity_2025}. On earnings announcement days, when we expect informed investors to have some of the most valuable information, we find that the value of information rises more than 3-fold. We also find that informed trading intensity as measured by \textcite{bogousslavsky_informed_2024} is higher when the value of information is higher. Moreover, the value of information is high when retail attention is high as measured by Google search volume \parencite{da_market_2025}.
In robustness tests, we find that the magnitude of the value of information hardly changes when we vary the frequency of observation, the way we sign trades as buys or sells, and if we include penny stocks.

In the last part of the paper we discuss potential resolutions to the value of information puzzle. We consider trends in the costs of active investing and other non-informational services provided by asset managers. We also consider strategies concentrating information acquisition on days with high value of information. We discuss the role of risk aversion, the role of partial information, the role of multiple informed investors as well as dynamic information acquisition. We discuss the role of non-competitive market makers and non-competitive asset managers. We generalize our results to discrete time, allowing for jumps, and relax the assumption that price changes are orthogonal to informed trading. We also consider the possibility that the value of information is correlated with the marginal value of wealth and consider cross-asset information. Finally, we discuss the role of behavioral investors.

Our paper is the first to estimate the value of information indirectly, by measuring the losses of noise traders, and the first to point out the conditions under which the (quadratic) covariation between price changes and order flow measures exactly this quantity. 
The generality and simplicity
of our measure allow us to reliably quantify the value of information 
and to document its puzzling magnitude relative to active asset management fees. 
For this purpose, it is paramount that it has easy to interpret units.\footnote{%
See \textcite{kyle_market_2016}, \textcite{kyle_market_2018}, %
\textcite{kyle_smooth_2018}, and \textcite{kyle_market_2019} 
for careful work on microstructure invariants and units of measurement.}
Economists have been interested in the value of information at least since \textcite{blackwell_equivalent_1953}, 
with recent theoretical work by \textcite{frankel_quantifying_2019} and \textcite{brooks_comparisons_2024}
increasingly aimed at measurement.
Here, we focus on the value of information to investors, and on a set of models that allow its measurement using trading data. The groundwork for our analysis lies in the classic contributions of
\textcite{grossman_impossibility_1980} and \textcite{kyle_continuous_1985},
who evaluate the value of information to investors within a rational‐expectations framework.
\textcite{manela_value_2014} estimates a structural model to value diffusing 
information about drug approvals and finds a small amount of noise trading.
\textcite{peress_noise_2021} proxy for noise trades with the trades of individual investors, whose weak correlation with fundamentals matches how these models define noise trading. \textcite{farboodi_where_2021, farboodi_valuing_2025} compute the value of information implied by a
structural noisy rational‐expectations model and show it is higher for large
growth firms. \textcite{bogousslavsky_informed_2024} develop a machine learning-based measure of informed trading intensity that captures the prevalence and timing of informed trading. Our approach requires no such proxies because the value of information depends on noise trades only through their covariation with prices, which order flow identifies.\footnote{%
The CARA–normal (mean–variance) noisy rational‐expectations framework
underpins a large literature on information choice; see
\textcite{veldkamp_information_2011} for a survey and
\textcite{malamud_decentralized_2017} and \textcite{davila_trading_2021} for
recent contributions. However, empirical implementation is challenging because
stock prices are neither stationary nor Gaussian, and recent studies suggest
that departures from CARA utility can be consequential
\parencites{savov_price_2014, malamud_noisy_2015}. Our estimate by contrast is agnostic to the functional form of the utility function.} 

\textcite{epstein_how_2014} and
\textcite{kadan_estimating_2019} examine the value of macroeconomic information
under \textcite{epstein_substitution_1989} preferences, which lets them
distinguish between the instrumental and psychic components of information. \textcite{croce_leading_2023} quantifies the relevance of heterogeneity in the timing of exposure of cash flows to aggregate shocks for returns.
\textcite{ai_identifying_2026} develops an asset market-based test for preference for the timing of resolution of uncertainty and apply it to FOMC announcements using options.
\textcite{davila_probability_2025} value marginal changes in probabilities, including changes in public and private information, as a standard asset-pricing formula with hypothetical cash flows derived from changes in the survival function.
 \textcite{kadan_liquidity_2025} show that the ratio of fundamental variance to price impact provides an approximation for the value of information in the \textcite{back_imperfect_2000} model with multiple differentially informed investors, and estimate it using high-frequency stock data. Here we provide a precise expression (rather than an approximation) for the value of information in a more general setting which incorporates this previous work as a special case, and use it to document its puzzling magnitude relative to active asset management fees.%

Our paper also relates to research on the informational efficiency of financial markets \parencite{fama_efficient_1970}. 
Recent work documents that price informativeness has risen since the 1960s, especially among large growth firms \parencite{bai_have_2016,farboodi_where_2021}.
\textcite{davila_identifying_2025} estimate informativeness by regressing prices on fundamentals and find it is greater for large stocks with high turnover, idiosyncratic volatility, institutional ownership, and analyst coverage. \textcite{kacperczyk_foreign_2021} report that higher foreign ownership boosts price informativeness, and \textcite{davila_volatility_2023} examine its relation to volatility. We contribute a simple measure of the value of information that incentivizes market participants to acquire private signals and, through trading, enhance capital allocation and managerial decisions \parencites{bond_real_2012, brogaard_economic_2019, goldstein_information_2023}. An interpretation of our finding that the value of information is modest, is that the equity market is highly efficient.

The paper proceeds as follows. 
Section \ref{sec:theory} provides the theoretical foundation
for our measure of the value of information. 
Section \ref{sec:estimation_and_validation} discusses its estimation.
Section \ref{sec:puzzle} discusses the value of information puzzle and its potential explanations.
Section \ref{sec:Conclusion} concludes. Robustness tests are in the Appendix.

\section{\label{sec:theory}Theory}

In this section, we use a generic version of continuous-time Kyle-Back models of trade with asymmetric information to develop a general closed-form expression for the value information. The main novel point we are making is that under standard assumptions, the exact functional form of the equilibrium is not needed to calculate the value of information, which takes a simple closed-form that is easy to estimate.  

\subsection{\label{model}Setup}

An asset is traded by three types of agents: informed traders, uninformed noise traders, and market makers.
Informed and uninformed submit their buy and sell orders, and the market makers set prices to clear the market.

The asset has a random fundamental (terminal) value $v$ with a continuous distribution function $F$. Similar to \textcite{back_insider_1992}, we do not make specific functional assumptions on $F$, which can accommodate a variety of different functional forms. The asset is traded continuously over a time interval $[0,T]$. One or more informed traders get some information about $v$ at time $t=0$. 
This information may be perfect as in \textcite{kyle_continuous_1985} or imperfect as in \textcite{back_imperfect_2000}.
Their cumulative orders at time $t$ are denoted by $X_t$, assumed to be a continuous It\^o process.

Noise traders demand $Z_t$ shares on net through time $t$. The cumulative noise trading process is given
by%
\begin{equation}
dZ_t =\sigma_{z}(t) dB_t , \label{eq:dZ_process}
\end{equation}%
where $B_t$ is a Wiener process and $\sigma_{z}(t)$ is the instantaneous volatility of noise trading, which may be stochastic as in \textcite{collin-dufresne_insider_2016}. Market makers cannot tell apart the informed and uninformed trades but observe the total order flow $Y_{t}=X_{t}+Z_{t}$. We denote the instantaneous variance of order flow by $\sigma^2_{y}$. Market makers set the price $P_t$ based on $Y_t$, such that the price process is a continuous It\^o process.\footnote{The assumption that $X_t$ and $P_t$ are continuous It\^o processes rules out jumps. See Section \ref{subsub:jumps} for a discussion of the effect of jumps on the value of information. Note that $Z_t$ is also a continuous It\^o process by virtue of the Wiener process $B_t$.} 

The noise traders' losses stem from prices moving against them when they trade.
For example, if they submit an order to buy $dZ_t>0$ shares and the price per share increases by $dP_t>0$, then they effectively pay a ``slippage'' cost of $dP_{t}dZ_{t}>0$. Given this, the total expected loss of noise traders is 
$
E\int_{0}^{T}dP_{t}dZ_{t}
$. Similar to \textcite{back_insider_1992}, unless mentioned otherwise, the operator $E$ denotes an expectation taken over $v$ as well as the stochastic variables in the integrand.\footnote{We follow \textcite{back_imperfect_2000} and use $\int_{0}^{T}dP_{t}dZ_{t}$ as an intuitive yet somewhat informal notation for the quadratic covariation between the It\^o processes $P_t$ and $Z_t$, often denoted $[P,Z]_T$.}

Our first key assumption is that market makers are competitive and thus earn zero profits on average. As a result, the value of information, which is the expected profit to informed traders (denoted by $\Omega$), must equal the expected loss to noise traders.

\begin{assumption}[Competitive Market Making]\label{as:competitive}
The total value of information to informed traders equals the expected loss to noise traders, 
\begin{equation}
\Omega=E\int_{0}^{T}dP_{t}dZ_{t}. \nonumber
\end{equation}
\end{assumption}

Assumption \ref{as:competitive} is standard in Kyle-Back models such as \textcite{kyle_continuous_1985}, \textcite{back_insider_1992}, \textcite{back_imperfect_2000}, and \textcite{collin-dufresne_insider_2016}. In such models, the value of information may be associated with model specifics such as the number of informed traders, the accuracy and correlation of their signals, and the distribution of the fundamental value, all of which are hard to measure empirically. Assumption \ref{as:competitive} says that in order to estimate the value of information what we need is to estimate the losses to noise traders, which, in turn, already capture all the above model parameters. Our fundamental insight is that, empirically, measuring the expected losses of noise traders is considerably simpler than measuring the expected profits of informed traders directly.

Our second assumption states that the informed trading process is stochastically uncorrelated with the price process.  

\begin{assumption}[Orthogonal Informed Trading]\label{as:orthogonal_dX}
Informed traders' strategy is orthogonal to the price process, i.e., $\int_{0}^{T}dX_{t}dP_{t}=0$.   
\end{assumption}

Formally, this means that the quadratic covariation of $X_t$ and $P_t$ is zero. Assumption \ref{as:orthogonal_dX} is a feature of all continuous-time Kyle-Back style models we are aware of including \textcite{kyle_continuous_1985}, \textcite{back_insider_1992}, \textcite{back_imperfect_2000}, \textcite{back_information_2004}, \textcite{collin-dufresne_insider_2016}, \textcite{banerjee_strategic_2020}, and \textcite{banerjee_dynamics_2022}. In these models, informed trading is locally deterministic and therefore has zero quadratic covariation with the price process.  

The intuition behind Assumption \ref{as:orthogonal_dX} is compelling and is now standard in Kyle-Back type models. Economically, the quadratic covariation, $\int_{0}^{T} dX_t dP_t,$ captures the cumulative losses to the informed traders from the price impact of their trades. This magnitude tends to be positive by the fact that market makers try to tease out the informed trades from noise trades through observed order flow.  \textcite{back_insider_1992} shows that informed traders optimally choose a smooth trading strategy that minimizes the cumulative price impact to their trades, thereby setting  $\int_{0}^{T} dX_t dP_t$ to zero, as stated in Assumption \ref{as:orthogonal_dX}. Thus, this assumption reflects that informed traders are trying to disguise their trades to avoid a strong price impact by smoothing their trades over time. 

As \textcite{back_insider_1992} notes, the condition $\int_{0}^{T} dX_t dP_t=0$ essentially means that  informed trades during small time increments of $\Delta t$ are of the order $\Delta t$. By contrast, noise trading is driven by a Brownian motion, and therefore the magnitude of noise-trading shocks during small time increments of $\Delta t$ is of order $\sqrt{\Delta t}$. Thus, informed trading is ``smooth,'' while noise trading is ``jittery.''  Variations in prices are then dominated by the jittery nature of noise trading, and the quadratic covariation of prices with informed trading vanishes. If informed traders were to add randomness to their trades in the form of a Brownian motion with increments of order $\sqrt{\Delta t}$ (like noise traders), their orders would contribute to the
diffusive part of prices. But such behavior is suboptimal in the Kyle-Back environment because it would render $\int_{0}^{T} dX_t dP_t$ being positive. The informed traders' optimal strategy in those models is therefore to spread trades smoothly over time, which
is exactly what forces the quadratic covariation of $X_t$ with $P_t$ to vanish.

\subsection{The value of information}

As noted, Assumptions \ref{as:competitive} and \ref{as:orthogonal_dX} are satisfied by many continuous-time Kyle-Back type models of trading under asymmetric information. While these models may differ in terms of the number of informed traders and the nature of private information, these two assumptions summarize standard and well-accepted features. 

Note that the stochastic loss by noise traders $\int_{0}^{T}dP_{t}dZ_{t}$ can be written as:
\begin{align}
  \int_{0}^{T}dP_{t}dZ_{t} &
  = \int_{0}^{T}dP_{t}(dY_{t} - dX_{t}) = \int_{0}^{T}dP_{t}dY_{t},
\end{align}
where the last equality is a consequence of Assumption \ref{as:orthogonal_dX}. It follows that the total expected loss to noise traders is given by $E \int_{0}^{T} dP_{t} dY_{t}$, 
which by Assumption \ref{as:competitive} is the value of information. We have established the following simple but powerful result.

\begin{proposition}[Value of information]\label{prop:voi_dPdY}
  Under Assumptions \ref{as:competitive} and \ref{as:orthogonal_dX}, the value of information over an interval of length $T$ is given by

\begin{equation}
\Omega =E\int_{0}^{T}dP_{t}dY_{t}.\nonumber
\end{equation}
\end{proposition}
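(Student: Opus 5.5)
The plan is to work directly with quadratic covariations and use bilinearity, following the informal computation given just before the statement but stating each step rigorously. Write the integrals in the form used in Assumptions \ref{as:competitive} and \ref{as:orthogonal_dX}: $\int_0^T dP_t dZ_t = [P,Z]_T$, $\int_0^T dP_t dX_t = [P,X]_T$, and $\int_0^T dP_t dY_t=[P,Y]_T$. All three are well defined, because $P_t$, $X_t$ and $Z_t$ are continuous It\^o processes (hence continuous semimartingales), and $Y_t=X_t+Z_t$ is one as well.

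First, I will invoke bilinearity of the quadratic covariation for continuous semimartingales. Since $Z_t = Y_t - X_t$, this gives, pathwise (almost surely),
\begin{equation}
[P,Z]_T=[P,Y]_T-[P,X]_T. \nonumber
\end{equation}
Bilinearity follows from the definition of $[\cdot,\cdot]$ as the limit in probability of sums of products of increments over refining partitions. In the It\^o-differential notation it is just $dP_t\,dZ_t = dP_t\,dY_t - dP_t\,dX_t$.

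Second, I will apply Assumption \ref{as:orthogonal_dX}, which says $[P,X]_T=0$ almost surely. The identity then reduces to $[P,Z]_T=[P,Y]_T$ almost surely.

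Third, I will take expectations over $v$ and the driving randomness. By Assumption \ref{as:competitive}, $\Omega=E[P,Z]_T$, so $\Omega=E[P,Y]_T$, which is the claim.

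The only genuine subtlety, and the step I expect to need care, is integrability. The expectation $E[P,Z]_T$ is presupposed to be finite, because Assumption \ref{as:competitive} equates it to $\Omega$. Since $[P,Y]_T=[P,Z]_T$ almost surely, $E[P,Y]_T$ is then the expectation of an almost surely equal random variable, so it is finite and equal as well. No separate integrability argument is required.

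If one wanted more concreteness, one could write $dP_t=\mu_P dt+\sigma_P dB_t + \tilde\sigma_P dW_t$ and $dX_t=\theta_t dt$, as in the smooth strategies discussed after Assumption \ref{as:orthogonal_dX}. Then $[P,X]\equiv 0$ holds automatically, and $[P,Y]_T=[P,Z]_T=\int_0^T\sigma_P\sigma_z\,dt$. This illustrates the result, but the proof itself needs only the two assumptions and bilinearity.
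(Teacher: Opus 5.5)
Your proposal is correct and is essentially the paper's own argument: write $dZ_t = dY_t - dX_t$, use bilinearity of the quadratic covariation, drop $\int_0^T dP_t\,dX_t$ by Assumption~\ref{as:orthogonal_dX}, then take expectations and apply Assumption~\ref{as:competitive}. Your integrability remark is a valid addition that the paper leaves implicit: since $[P,Y]_T=[P,Z]_T$ almost surely, finiteness carries over automatically.
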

\vspace{1cm}

The key in Proposition \ref{prop:voi_dPdY} is that we can replace noise trading changes $dZ_{t}$ with order flow $dY_{t}$ in the calculation of noise traders losses. While noise trading is not observable and cannot be easily inferred from data, order flow is observable and available at high frequency from standard databases, which facilitates our empirical analysis. It is important to note that noise trading and order flow are not equal to each other. The reason the substitution of $dZ_{t}$ with $dY_{t}$ in the proposition is valid is that under Assumption \ref{as:orthogonal_dX}, both have the same quadratic covariation with price changes. From an empirical perspective, Proposition \ref{prop:voi_dPdY} allows us to estimate the value of information by simply estimating the integral $\int_{0}^{T}dP_{t}dY_{t}$ using high frequency data, as an approximation to the continuous-time limit.

The strength of this result is that a simple estimable moment captures the value of information regardless of various model specific details. The reason is that the endgogenous optimal response of the market makers to order flow $dY$ is captured by price changes $dP$. When market makers perceive a high degree of adverse selection, they will set a high price impact, which will then translate into a high expected loss to the noise traders. 
This expected loss is then the value of information.
Whether this happens because of greater signal precision, a lower volatility of noise trading, or another reason, is irrelevant for our purposes. 
\subsection{A useful decomposition}
We next show that with two additional assumptions, the value of information can be decomposed into two intuitive components. It is important to emphasize that these additional assumptions are not necessary for our estimation of the value of information. We introduce them only to facilitate interpretation.

The first additional assumption imposes further restrictions on the trading strategies used by informed and noise traders.
\begin{assumption}[Diffusive trading strategies]\label{as:diffusive_trading_strategies}
The informed trading strategy $dX_{t}$ satisfies 
$$dX_t = \theta_t dt,$$ 
where $\theta_{t}$ is absolutely continuous. That is, $dX_t$ has no stochastic term.
Moreover, the noise trading process is given by%
$$dZ_t =\sigma_{z} dB_t .$$%
where $B_t$ is a Wiener process and $\sigma_{z}$ is a volatility parameter. 
\end{assumption}
The restriction on informed trading strategy is common in the literature. Indeed, in all Kyle-Back models we are aware of informed trading is locally deterministic. As for the restriction on noise trading, this assumption rules out the setting of \textcite{collin-dufresne_insider_2016} where $\sigma_z$ is itself a stochastic process. A direct implication of this assumption is that the quadratic variation of order flow is constant and equals the quadratic variation of noise trading, $(dY_t)^2=\sigma^2_z dt=\sigma^2_y dt.$

The second additional assumption is on the price process. In the classic Kyle-Back continuous-time models, prices are linear in order flow with Kyle's-lambda being the relevant slope. Critically, in those models, asset-specific order flow is the only reason for price changes. This strong assumption would suffice for our purposes. But, it is quite plausible, in fact, that other factors, orthogonal to the asset's order flow, may affect prices. For example, if a systematic risk factor, such as the market, changes in value, the price of individual stocks may change as well. To allow for that, let ${W_t}$ be a Wiener process uncorrelated with $B_t$. The idea is that ${W_t}$ represents publicly observable price-relevant information that is not captured in the order flow. We assume that prices respond to both order flow, ${Y_t}$, and to ${W_t}$ in a linear fashion:

\begin{assumption}[Linear prices]\label{as:linear_dP}
Order flow $Y_{t}$ affects price $P_{t}$ according to
\begin{equation}
dP_t=\lambda_{t} dY_t + \sigma_{w} dW_t, \label{eq:dPdY}
\end{equation}
for some price impact $\lambda_{t}>0$, coefficient $\sigma_{w}$, and a Wiener process $W_t$ which is orthogonal (i.e., has zero quadratic covariation) to $X_t$ and $B_t$.
\end{assumption}

Note that in the traditional Kyle-Back type models, price changes are assumed to be perfectly correlated with order flow and price impact is non stochastic. That is $\sigma_{w}$ is set to be identically zero and $\lambda_{t}$ is a function of time only. Therefore, Assumption \ref{as:linear_dP}, which allows for $\lambda_t$ to be stochastic and for additional sources of variation in prices, incorporates these settings as a special case. Because we abstract from many model specific details, our setup allows for other public information $W_t$ to arrive in the market and affect the asset's price as in \textcite{back_identifying_2018}. This is particularly useful from an empirical perspective, because, in reality, prices are not perfectly co-linear with order flow. 

The next proposition establishes that under Assumptions \ref{as:competitive}--\ref{as:linear_dP}, the value of information can be decomposed into a particularly intuitive form. 

\begin{proposition}[Value of information decomposition]\label{prop:voi_Elambda_sigmay}
  Under Assumptions \ref{as:competitive}--\ref{as:linear_dP},  the value of information
over an interval of length $T$ is the product of mean price impact $E(\bar{\lambda}_{t})=E(\frac{1}{T}\int_{0}^{T}\lambda_{t}dt)$ and order flow variance $\sigma_y^2T$:
\begin{equation}
\Omega = E(\bar{\lambda}_{t})\sigma_{y}^{2}T.\label{eq:Omega_prop_1}
\end{equation}

\begin{proof}
Using Assumption \ref{as:linear_dP}, 
\begin{align}
\int_{0}^{T}dP_{t}dY_{t} & =\int_{0}^{T}\lambda_{t}\left(dY_{t}\right)\left(dY_{t}\right)+\int_{0}^{T}\sigma_{w}\left(dW_{t}\right)\left(dY_{t}\right)\nonumber\\
 & =\int_{0}^{T}\lambda_{t}\left(dY_{t}\right)^2+\int_{0}^{T}\sigma_{w}\left(dW_{t}\right)\left(dX_{t} + dZ_t\right)\nonumber\\
 & =\int_{0}^{T}\lambda_{t}\sigma_{y}^{2}dt,
\end{align}
where the last equality follows from the fact that $W_t$ is orthogonal to $X_t$ and $Z_t$.%

By Proposition \ref{prop:voi_dPdY}, 

\begin{align}
\Omega & =E\left[\int_{0}^{T}dP_{t}dY_{t}\right]\nonumber \\
&=E\left[\int_{0}^{T}\lambda_{t}\sigma_{y}^{2}dt\right]\nonumber\\
 & =E\left[\frac{1}{T}\int_{0}^{T}\lambda_{t}dt\right]\sigma_{y}^{2} T\nonumber \\
 & =E(\bar{\lambda}_{t})\sigma_{y}^{2}T,
\end{align}
as stated.
\end{proof}

\end{proposition}

This result is intuitive. The value of information equals the expected aggregated losses of noise traders over time. By Proposition \ref{prop:voi_dPdY}, these losses are the time-aggregated product of price changes and the order flow. Proposition \ref{prop:voi_Elambda_sigmay} establishes that under linear pricing, the former aggregates to expected price impact, $E\left[\lambda_{t}\right]$, while the latter aggregates to order flow variance over the interval of length $T$, $\sigma_{y}^{2}T$. In our empirical analysis, we rely on this decomposition to gain economic intuition about variations in the value of information.

\subsection{Discussion and connection to earlier work}

Our approach in this paper is different from prior work in that we abstract from many detailed assumptions on fundamentals and instead, focus on standard features of Kyle-Back models that are crucial for the estimation of the value of information. In doing so, we are gaining simplicity and generality but at the cost of losing some details. Prior work such as \textcite{back_imperfect_2000}, delved very deeply into the dynamics of informed trading, yielding important insights on how informed traders spread their trades over time. Our purpose in this paper is quite different because our sole focus is on the estimation of the value of information. As it turns out, for this purpose, the detailed dynamics of informed trading are not needed because the value of information to informed traders is simply the expected losses to noise traders (Assumption \ref{as:competitive}). Thus, once one can estimate these losses, the exact nature of the dynamics of informed traders becomes unnecessary for the calculation of the value of information. As a result, one can avoid making assumptions on the value distribution, number of informed traders, the nature and precision of the information they possess, and the underlying correlation structure. 

In our setting, the challenge then becomes to estimate the losses of noise traders, which by itself is not observable. However, the quadratic covariation of noise trading with price changes is equal to that of total order flow -- a consequence of Assumption \ref{as:orthogonal_dX}. This, in turn, allows us to use variations in total order flow---which is empirically observable---to estimate the losses of noise traders. 

In \textcite{kadan_liquidity_2025}, we take a different approach to estimating the value of information. We derive approximate bounds on the value of information in one specific Kyle-type model -- that of \textcite{back_imperfect_2000}. We then show that in that setting, the value of information is approximately equal to the ratio of fundamental variance $\sigma_v^2$ to price impact at time zero -- $\lambda(0)$. The setting offered in this paper incorporates \textcite{kadan_liquidity_2025} as a special case. A major improvement with the approach provided in this paper is that we no longer rely on an approximation derived from lower and upper bounds on the value of information. Instead, the expressions obtained in Propositions \ref{prop:voi_dPdY} and \ref{prop:voi_Elambda_sigmay} are exact.

Another improvement relative to much of the Kyle-Back literature is that we do not require an exact linear relation between price changes and order flow (perfect correlation). Instead, we allow for price changes to be associated with other observable sources of information. This is particularly important from an empirical perspective---in our sample, for individual stocks, the mean correlation between prices changes and order flow is only about $0.3$.%

\subsection{Examples}
In this section, we provide two solved examples to demonstrate how the expressions in Propositions \ref{prop:voi_dPdY} and \ref{prop:voi_Elambda_sigmay} generalize known results. 

First, consider \textcite{kyle_continuous_1985}. His continuous auction (Section 5) studies a special case of our model in which $T=1$, $v$ is normally distributed, noise trading has no stochastic volatility, and there is one informed trader who obtains a perfect signal on the asset's value. Kyle shows that in this model, the value of information is given by $\Omega=\sigma_v\sigma_z$ -- the product of value volatility and noise trading volatility (page 1330). 

To see why this is a special case of our results, note that in Kyle's equilibrium, $\lambda_t$ is non-stochastic and constant over time, given by (Theorem 3) $\lambda=\sigma_v/\sigma_z$. Plugging this back in our Proposition \ref{prop:voi_Elambda_sigmay} gives 
\begin{equation}
\Omega = E(\bar{\lambda}_{t})\sigma_{y}^{2}={\lambda}\sigma_{y}^{2}=(\sigma_v/\sigma_z)\sigma_y^2. \label{eq:Kyle}
\end{equation}

Now, in Kyle's equilibrium, the informed trader's strategy is locally deterministic (no martingale component) and, therefore, $\sigma^2_y=\sigma_z^2$. Plugging back in (\ref{eq:Kyle}) gives $\Omega=\sigma_v\sigma_z$, coinciding with Kyle's expression.  

As a second example, consider \textcite{back_imperfect_2000}. They study a continuous-time model in which $T=1$, $v$ is normally distributed, $\sigma^2_z=1$, and noise trading has no stochastic volatility. In their model, there are multiple informed traders who obtain an imperfect signal on the asset's value. In their equilibrium (Theorem 1), $\lambda_t$ is locally deterministic but varies over time. Applying this special case to Proposition \ref{prop:voi_Elambda_sigmay} gives
\begin{equation}
\Omega = \bar{\lambda}_{t}\sigma_{y}^{2}=\sigma^2_y\int_0^1\lambda_tdt. \label{eq: voi_bcw}
\end{equation}
Now, noting that equilibrium informed trading in \textcite{back_imperfect_2000} is locally deterministic, their model results in $\sigma^2_y=\sigma_z^2=1$. Plugging back in (\ref{eq: voi_bcw}) gives 
\begin{equation}
\Omega = \int_0^1\lambda_tdt, 
\end{equation}
which coincides with their expression for the value of information in Corollary 3.

\section{\label{sec:estimation_and_validation}Estimation and validation}

We next explain how we estimate the value of information and validate our measure.

\subsection{\label{sub:estimation}Data and estimation}

We estimate daily values of information for each stock in the NYSE TAQ database,
which covers all US publicly traded stocks. 
For regulatory reasons, TAQ data is highly comprehensive and includes even off-market trades that occur in dark pools, or that were routed
to specific market markers in exchange for payment for order flow. 
Our sample includes \nstocks~common stocks over the \dailysampleperiod~sample
period, which we could match by trading symbol and date to CRSP/Compustat.
Following \textcite{amihud_illiquidity_2002} we drop all stocks with a
previous-day closing stock price smaller than \$5 to avoid market
microstructure effects.\footnote{In the Appendix we show how 
our estimates change if we keep penny stocks.} 

Empirically, we interpret the daily value of information for a particular
stock as the dollar amount informed investors would be willing to pay at the close of the
trading day for a signal about the closing price at the end of the next trading day. Our approach captures signal precision indirectly through the losses of noise traders. All else equal, a more precise signal means higher price impact and therefore higher value of information.

Our main estimator for the value of information relies on a discrete approximation of the integral in Proposition 
\ref{prop:voi_dPdY}. Let $P_{jt\tau}$ denote the price of asset 
$j$ on day $t$, at time $\tau\in[0,T]$, where $T$ denotes the time in years that passes on day $t$. We observe equidistant observations
over time intervals $h=\mbox{1 minute}$, so we estimate the sum at
time $T$ based on $N+1$ discrete observations recorded at times $\tau_0$, $%
\tau_1$, \ldots, $\tau_N=N h=T$ (We vary $h$ in robustness analysis). We annualize the value
of information by setting $T=1/252$. Let $\Delta P_{jti}=P_{jt\tau_i}-P_{jt\tau_{i-1}}$ be the change in the price of asset $j$ on day $t$ over the interval $%
i=1,\ldots,N$. Similarly, let $\Delta Y_{jti}=Y_{jt\tau_i}-Y_{jt\tau_{i-1}}$ be the change in cumulative signed share order flow, or simply net order flow over the same interval.
For each stock $j$ on date $t$, we estimate a daily annualized value of information, 
\begin{equation}
    \hat\Omega_{jt} = \frac{1}{T} \sum_{i=1}^N \Delta P_{jti} \Delta Y_{jti}.\label{eq:Omega_hat}
\end{equation}

For stationarity, we measure the value of information in two main ways. The first is in millions of dollars inflation-adjusted using the CPI to \cpibaseline values. The second is as a percentage of aggregate stock market capitalization. The first way is sometimes easier to interpret, while the second facilitates comparison with fees estimates such as in \textcite{french_presidential_2008}.

Several approaches to signing TAQ trades as buys ($+1$) or
sells ($-1$) have been proposed by prior work. Our baseline estimates are
based on the \textcite{chakrabarty_trade_2007} approach, using the
algorithms developed by \textcite{holden_liquidity_2014}. In the
Appendix we show that our qualitative and quantitative conclusions are not
sensitive to this choice.

We rely on the decomposition in Proposition \ref{prop:voi_Elambda_sigmay} to gain intuition for the source of variation in the value of information over time and across firms. The decomposition allows us to ask
if higher value is due to higher price impact or order flow variance.
To get a decomposition that does not depend on an arbitrary number of shares outstanding,
we focus on annualized dollar measures, and decompose the value of information as follows.
Let $C$ be a scaling constant used to convert a dollar amount to millions of real dollars.
Let $\tilde{\lambda} = \frac{E\left[\lambda_{t}\right]}{P_0/C}$ measure the effect of a million dollar buy order on returns.
Let $\tilde{\sigma}_y = \sigma_y P_0/C$ measure the annualized volatility of order flow in millions dollars.
Then, the annualized log value of information in real millions of dollars can be written,

\begin{equation}
\log\left(\Omega/C\right) = \log\tilde{\lambda} + \log\tilde{\sigma}_y^2. \label{eq:logvoi_decomposition}
\end{equation}
Of course, when we use this decomposition, we restrict the sample to observations with a positive price impact coefficient.

Consistent with Assumption \ref{as:linear_dP}, we estimate price impact $\lambda_{jt}$ using a regression of 1-minute log returns on
contemporaneous share order flow, 
\begin{equation}
r_{jti} \equiv \log \left( P_{jt\tau_{i}} / P_{jt\tau_{i-1}} \right) = \hat{\lambda}_{jt}\Delta Y_{jti}+w_{jti}.
\label{eq:lambda_regression}
\end{equation}
We note that while in theory $\lambda$ is positive, it could be negative in
any finite sample. In addition, $\hat{\lambda}_{jt}$ may be subject to selection bias if traders use price-dependent strategies and cancel costly orders \parencite{obizhaeva_selection_2011}. A large literature proposes other impact measures \parencite{holden_liquidity_2014}. We favor our measure because it aligns closely with the model used to interpret the data, making the value-of-information units straightforward.
Finally, we use the closing price of the preceding trading day, $%
P_{jt-1\tau_N}$ as the initial asset price $P_0$.

\subsection{Estimated information values and other summary statistics}

\begin{table}[tbp]
\caption{Information values and other summary statistics}
\label{tbl:summarystats} 
\par
\begin{center}
{\footnotesize \input{tables/summarystats_\baseapproach\basedir_%
\startyyyymm_\lastyyyymm.tex} }
\end{center}
\par
{\footnotesize Notes: Daily common stocks panel from NYSE TAQ, CRSP, and Compustat, \sampleperiod%
. \infoval \infovallambdasigmay \inflation \vofp \prcimp \size \momret \beme \earnings \winsor
}
\end{table}

Table \ref{tbl:summarystats} reports sample moments of the value of information 
as well as summary statistics for the sampled firms. 
Our estimates refer to the annualized value of acquiring information that is realized over a single day. Hence, the reported values represent the willingness to pay for receiving daily information on a stock for an entire year. The average annualized value of information in our sample is \$\EVwannualUSDm~million (\$\EVwdailyUSDapprox~per day), with substantial dispersion: the standard deviation is about \$\StdclusVwannualUSDm~million.

Alongside our main estimator \eqref{eq:Omega_hat}, derived from Proposition \ref{prop:voi_dPdY}, we also present estimates based on the decomposition in Proposition \ref{prop:voi_Elambda_sigmay}. In that case, we first estimate $\lambda$ and $\sigma_y^2$ separately for each stock–day and then take their product. The two sets of estimates are virtually identical, reinforcing that the decomposition closely approximates the general value of information.

The median price impact ( $\lambda$ divided by the stock price) is \medianpriceimpact, implying that \$1 million of order flow moves prices by \medianpriceimpactpct~percent.\footnote{Our median price‐impact estimate aligns with \textcite[Table II]{hasbrouck_trading_2009}, where a \$1 million order shifts the log price by 0.7\%. %
} About \negativeVwpct~percent of stock-day observations yield a negative information value estimate. We include these when reporting information value in levels, but exclude them when reporting results in logs below. The median firm in our sample has a market capitalization of \$\medianmktcapUSDb~billion, a book–to–market ratio of \medianbemepct~percent, and momentum indicating a \medianmompct~percent appreciation over the prior 11 months. Firms reported earnings on roughly \meanernpct~percent of sample days.

\subsection{\label{sub:infoval_time_series}The value of information over time}

We next turn to validating our measure by focusing on times when we expect the value of information to be particularly high, and by comparing it with established results.

\begin{figure}
\caption{Value of information over time}
\label{fig:infoval_monthterms}
\begin{center}
\begin{subfigure}[t]{0.75\columnwidth}
\includegraphics[width=1\columnwidth]{figures/infoval_month_%
\baseapproach\basedir_\startyyyymm_\lastyyyymm.pdf}
\caption{\label{fig:infoval_month_levels}Value of information in levels}
\end{subfigure}
\par
\begin{subfigure}[t]{0.75\columnwidth} 
\includegraphics[width=1\columnwidth]{figures/lninfoval_monthterms_%
\baseapproach\basedir_\startyyyymm_\lastyyyymm.pdf}
\caption{\label{fig:infoval_month_logs}Value of information in logs and its components}
\end{subfigure}
\end{center}
\par
{\footnotesize Notes: The solid line is the monthly information value
averaged over stocks and days surrounded by its 95 percent confidence
interval, based on time-clustered variance estimates. \infoval \loginfovalterms \shades
}
\end{figure}

Figure \ref{fig:infoval_monthterms} traces the evolution of the value of information over time. Panel (a) plots the level of $\Omega$, averaged across stocks and days by month. Panel (b) shows its log decomposition into log price impact and log order flow volatility, which add up to log value of information. Recessions are shaded. The figure indicates that in turbulent periods—especially recessions—order flow volatility rises and liquidity deteriorates, as seen in higher price impact. These result in higher values of information, consistent with the findings of \textcite{kadan_liquidity_2025}. This pattern is most pronounced in the 2007--2009 financial crisis and the 2020 Covid-19 episode, where both order flow volatility and illiquidity surge, producing a spike in the value of information.

\subsection{\label{sub:infoval_cross_section}The value of information across firms}

Previous work has found that the value of information is higher for large (high market cap) and growth (low book-to-market) stocks \parencites{farboodi_where_2021} and for momentum (high recent returns) stocks \parencites{kadan_liquidity_2025}. We confirm that this pattern shows up when using our new measure with less noise.

\begin{table}
\caption{Value of information and stock characteristics}
\label{tbl:reg_on_characteristics}
\par
{\footnotesize 
\begin{subtable}[t]{1\columnwidth}
\input{tables/reg_\baseapproach\basedir_\startyyyymm_\lastyyyymm_lndPdYw_on_lnwtw_bemew_momretw.tex}
\caption{\label{tbl:lndPdYw_on_lnwtw_bemew_momretw}Information values are higher for large, growth and momentum stocks}
\end{subtable}
\begin{subtable}[t]{1\columnwidth}
\input{tables/reg_\baseapproach\basedir_\startyyyymm_\lastyyyymm_lnrvofpw_lnlampw_on_lnwtw_bemew_momretw.tex}
\caption{\label{tbl:lnrvofpw_lnlampw_on_lnwtw_bemew_momretw}Large-momentum stocks exhibit noisier order flow and greater liquidity}
\end{subtable}
}
\par
{\footnotesize Notes: Panel (a) shows regressions of the log value of
information and its components on stock characteristics. \loginfovalregs \size %
\momret \beme \neglambda \stderror \stars
}
\end{table}

Table \ref{tbl:reg_on_characteristics} studies the extent to which the value
of information can be explained by commonly studied characteristics: size,
book-to-market, and momentum. Panel \ref%
{tbl:lndPdYw_on_lnwtw_bemew_momretw} shows that large, growth and momentum
stocks have higher values of information. Unlike book-to-market, which is
subsumed by the inclusion of both stock and day fixed effects, large stocks and 
stocks that appreciated over the preceding year (high momentum stocks) have higher
values of information regardless of the regression specification.

Contrasting the R-squares to the Within-R-squares, we see that these characteristics explain a substantial share of the variation in information values, even after absorbing stock and day fixed effects.

Panel \ref{tbl:lnrvofpw_lnlampw_on_lnwtw_bemew_momretw} investigates the underlying 
reason for this variation using the decomposition of the value of information into price impact and order flow volatility (Proposition \ref{prop:voi_Elambda_sigmay}). It shows that large-momentum stocks exhibit noisier order flow and more liquid prices.

\subsection{\label{sub:infoval_earnings}The value of information around earning announcement days}

We next turn to studying how the value of information changes on dates in
which major pieces of information are released. Intuitively, a strategic
trader should be willing to pay a higher amount to learn the end-of-day
stock price at the beginning of days in which information is scheduled to be
released. To this end, we focus our attention on the most prominent
information release dates in the life of a firm---earnings announcement
dates.

We collect earnings announcement dates from the Compustat quarterly
database. Given that firms often report earnings after hours, we follow %
\textcite{engelberg_anomalies_2018} and define the earnings announcement
date as the date in which the trading volume in the firm scaled by market
volume is the highest among the official earnings date and the days just
before/after it.

\begin{figure}
\caption{Value of information rises when firms report earnings}
\label{fig:infoval_earntime}
\begin{center}
\includegraphics[width=0.9\columnwidth]{figures/lninfoval_earntimeterms_%
\baseapproach\basedir_-22_22.pdf}
\end{center}
\par
{\footnotesize Notes: \dailysolid \infoval \loginfovalterms
}
\end{figure}

Figure \ref{fig:infoval_earntime} plots the value of information averaged
across stocks and dates during a 45 trading day window around the earnings
announcement date (day 0). The figure shows a dramatic increase in the value
of information on days in which earnings are announced. This increase stems
primarily from a spike in volatility on these days whereas liquidity appears
to be actually mildly improving.

Table \ref{tbl:lndPdYw_on_ead_lnwtw_bemew_momretw} provides more formal
evidence supporting this conclusion. We regress the log of the value of
information for each firm and day during our sample on firm size,
book-to-market, momentum, date and stock fixed effects as well as an indicator
variable for earning-announcement dates. The coefficient of the earnings
indicator is positive and statistically significant in all specifications,
indicating that the value of information rises on earning-announcement days.
From an economic perspective, this effect is quite large, and implies that
the value of information increases more than 3-fold on the mean earnings
release ($e^{\erneffectonvaluebothfe} \approx \experneffectonvaluebothfe$).

\begin{table}
\caption{Information about earnings is highly valuable}
\label{tbl:reg_ead}
\par
{\footnotesize 
\begin{subtable}[t]{1\columnwidth}
\input{tables/reg_\baseapproach\basedir_\startyyyymm_\lastyyyymm_lndPdYw_on_ead_lnwtw_bemew_momretw.tex}
\caption{\label{tbl:lndPdYw_on_ead_lnwtw_bemew_momretw}Value of information is higher on earnings announcement days}
\end{subtable}
\begin{subtable}[t]{1\columnwidth}
\input{tables/reg_\baseapproach\basedir_\startyyyymm_\lastyyyymm_lnrvofpw_lnlampw_on_ead_lnwtw_bemew_momretw.tex}
\caption{\label{tbl:lnrvofpw_lnlampw_on_ead_lnwtw_bemew_momretw}Order flow volatility is higher and price impact is lower on earnings announcement days}
\end{subtable}
}
\par 
{\footnotesize Notes: Panel (a) shows regressions of the log value of
information and its components on size, book-to-market, and an earnings
indicator. \loginfovalregs \earnings%
\neglambda \stderror \stars 
}
\end{table}

Table \ref{tbl:lnrvofpw_lnlampw_on_ead_lnwtw_bemew_momretw} sheds further light
on the increase in the value of information on earning announcement days.
Here, we regress log order flow variance and log price impact on the earnings-day indicator
and control variables. In all specifications, we find a significant increase
in order flow volatility and a decrease in price impact on earning announcement days.
These two effects compete with each other, but because the increase in order flow volatility is larger, the value of information increases.

That the value of information rises on earnings announcement days is interesting in its own right. But it can also serve to explain the previously documented fact that investor attention spikes around these events. \textcite{ben-rephael_it_2017} and \textcite{liu_retail_2019} find that both institutional and retail investor attention rises on earnings days, and that this effect is stronger for institutional investors. This fact is consistent with institutional investors superior ability to gather and exploit private information. 

That price impact falls on earnings days is a novel and somewhat surprising finding. A reasonable prior would be that fundamental volatility is greater on earnings days and that market makers would respond by increasing price impact to protect themselves against adverse selection. But a counterforce is the intensity of noise trading on these days which appears to rise simultaneously.
The decline in price impact is also consistent with uninformed noise traders overestimating the novelty of earnings news \parencite{treynor_only_1971}. We are not aware of any other work that has documented this effect and believe it warrants further investigation elsewhere.

\subsection{\label{sub:infoval_iti}Informed trading intensity}

We next ask whether when our measure of the value of information is elevated, 
informed traders actually trade more. 
Building on \textcite{collin-dufrense_prices_2015}, 
\textcite{bogousslavsky_informed_2024} train a
machine learning model on trades by investors known ex post to be
informed---Schedule 13D filers, corporate insiders, and short sellers---and
use it to estimate a daily informed trading intensity (ITI) for each stock.
Table \ref{tbl:ITI_on_lndPdYw} regresses each of their five ITI measures on
the log value of information with date and stock fixed effects. All five
coefficients are positive and statistically significant: a one standard
deviation increase in the log value of information is associated with a 0.2
to 0.6 standard deviation increase in informed trading intensity within stock
and day. The relation is strongest for the impatient and short-seller
variants, consistent with our measure capturing the value of short-lived
private information.

\begin{table}
\caption{Informed trading intensity is high when information is valuable}
\label{tbl:ITI_on_lndPdYw}
\par
{\footnotesize
\input{tables/validation_ITI_on_lndPdYw_\baseapproach\basedir_200309_201907.tex}
}
\par
{\footnotesize Notes: Each column regresses an informed trading intensity
(ITI) measure of \textcite{bogousslavsky_informed_2024} on the log value of
information. \infoval ITI is a machine-learning estimate of informed trading
trained on trades by Schedule 13D filers (column 1), its impatient and
patient variants (columns 2--3), corporate insiders (column 4), and short
sellers (column 5). All variables are standardized, so coefficients are in
standard deviation units. The sample is September 2003 to July 2019, when
the ITI data end. \stderror \stars
}
\end{table}

\subsection{\label{sub:attention}Search attention}

A complementary external benchmark is investor attention. \textcite{da_market_2025}
construct a daily aggregate retail attention (ARA) measure from abnormal Google
search activity for individual stocks, and show that it negatively predicts
one-week-ahead market returns through retail-driven price pressure. ARA is
market-wide, built from search behavior, and entirely independent of the
microstructure inputs to our measure, which makes it a natural out-of-sample
check: if our statistic captures information-rich environments, it should be
elevated precisely when investors are paying attention. This also extends the
earnings-day evidence above, where the value of information rose exactly when
attention spikes \parencite{ben-rephael_it_2017,liu_retail_2019}, from scheduled
announcements to the daily market-wide level.

We aggregate our stock-day value of information into a daily equal-weighted
average and relate it to ARA over July 2004 to December 2019, the span of the ARA
series. Table \ref{tbl:aggval_on_ARA} reports the results: Panel A regresses the
log aggregate value of information on ARA in levels, while Panel B regresses the
daily change in each on the change in ARA, which removes any common trend. The two
series comove strongly and positively. A one standard deviation increase in retail
attention is associated with a 0.14 higher log value of information (about 15\%) in
levels, and a 0.06 (about 6\%) higher value in same-day changes. Both relations are
highly significant under Newey--West standard errors, and---unlike a spurious
trend---the first-difference relation survives. The coefficients are
stable across subperiods, including when we exclude the 2008--2009 financial
crisis, so the comovement is not an artifact of a single turbulent episode.

\begin{table}
\caption{The value of information is high when retail attention is high}
\label{tbl:aggval_on_ARA}
\par
{\footnotesize
\input{tables/validation_aggval_on_ARA_subsamples_\baseapproach\basedir_20040701_20191231.tex}
}
\par
{\footnotesize Notes: This table relates the daily equal-weighted average across
stocks of the log value of information to the aggregate retail attention (ARA)
measure of \textcite{da_market_2025}. \infoval Panel A reports levels regressions
of the aggregate log value of information on ARA; Panel B reports first-difference
regressions of its daily change on the change in ARA. ARA is standardized to unit
standard deviation over the sample, so coefficients are per one standard deviation
of attention. Each column reports a different sample window; the final column
excludes the 2008--2009 financial crisis. The sample is July 2004 to December
2019, the span of the ARA series. Newey--West heteroskedasticity- and
autocorrelation-consistent standard errors are in parentheses. \stars
}
\end{table}

Overall, the previous results validate that our proposed measure of the value of information is a useful and informative statistic, and that it yields results that are consistent with prior work.

\section{\label{sec:puzzle}The value of information puzzle}

\subsection{Economic magnitude of the value of information\label{sub:magnitude}}

We next step back to ask if the magnitude of the value of information that we estimate is consistent with that of the demand for information. Recall that from Table \ref{tbl:summarystats}, the average value of information per stock is about \$\EVwannualUSDm~million per year. This seems quite small, given that the average market capitalization of a stock in our sample is about \$\meanmktcapUSDb~billion. But, perhaps the value is considerably larger for some stocks. 

\begin{figure}
  \caption{Value of information by stock}
  \label{fig:infoval_bypermno}
  \begin{center}
  \includegraphics[width=1\columnwidth]{figures/dPdYw_bypermno_%
  \baseapproach\basedir_top15_linear.pdf}
  \end{center}
  \par
  {\footnotesize Notes: Mean value of information plotted against mean market equity 
  for each specific stock (permno). \infoval We label the 15 most information valuable
  stocks with their latest ticker symbols. 
  }
\end{figure}
Figure \ref{fig:infoval_bypermno} shows that indeed for some stocks, the value
of information is considerably larger. For example, the value for information on Tesla (TSLA) is about an order of magnitude larger than the average. But even for the most valuable stocks, the value of information is only a small fraction of market capitalization. To see this, in Figure \ref{fig:infovalpctofmarketcap} we plot the aggregate value of information as a percentage of aggregate market capitalization.

\begin{figure}
\caption{Value of information as percentage of market capitalization}
\label{fig:infovalpctofmarketcap}
\begin{center}
\includegraphics[width=1\columnwidth]{figures/infovalpctofmarketcap_lastdayofmonth_%
\baseapproach\basedir_\startyyyymm_\lastyyyymm.pdf}
\end{center}
\par  
{\footnotesize Notes: Plotted is the mean monthly information value, aggregated over stocks and days, and divided by the aggregate market capitalization in the prior month. The solid horizontal line is its full sample mean (\meaninfovalpctofmarketcap\%). \shades
}
\end{figure}

We estimate that the mean aggregate value of information is \meaninfovalpctofmarketcap\% of aggregate market capitalization. 
This aggregate statistic is equivalent to the market cap weighted ratio of the value of information to market capitalization.
From Figure \ref{fig:infovalpctofmarketcap}, we see that the value as percentage of market capitalization is remarkably stable over our sample period, 
though it tends to spike during turbulent times. Such spikes are partly mechanical, as a drop in stock market prices reduces the denominator in the ratio of the value of information to market capitalization.

Our estimates imply that the most active investors and asset managers can hope to gain by seeking superior information to forecast stock prices is only \meaninfovalpctofmarketcap\% on average of market capitalization. This estimate is remarkably small compared to the fees that active asset managers charge.
For example, \textcite{french_presidential_2008} estimates that investors pay active asset managers about \meanincrcostactivepctfrenchsample\% of market capitalization in fees each year searching for superior returns. We find it puzzling therefore that investors would pay so much more in fees than active asset management can possibly generate.

To further see why this is puzzling, consider the model by \textcite{garleanu_efficiently_2018}, where informed traders, uninformed noise traders, and market makers are joined by asset managers. They show that in a general equilibrium for both assets and asset management, the value of information equals management fees plus the cost of searching for good managers:
\begin{equation}
\text{Value of information} =  \text{Management fees} + \text{Search costs} \ge \text{Management fees}. \label{eq:voi_geq_fees}
\end{equation}
Assuming search costs are nonnegative, this implies that the value of information is at least as large as the fees charged by asset managers.
But our estimate of the value of information is much smaller than the fees charged by asset managers.

Note that our puzzle is different from the already puzzling finding in \textcite{french_presidential_2008} that the fees paid to active asset managers do not justify their average outperformance of passive investment strategies. Our findings imply that these fees are higher than what active asset managers could \emph{possibly} generate.

In Appendix \ref{sec:robustness} we show that this puzzling fact is robust to a battery of robustness tests. First, we study the sensitivity of our estimates to how we sign trades as buys or sells. In \autoref{fig:robust_Vw} we report the mean value of information across days and stocks for \signalgos. None of these have a major effect on the mean value of information. Second, in our baseline analysis we omit penny stocks with less than \$5 in price. The figure further shows that including such stocks would reduce the mean value of information. This is intuitive as penny stocks have little order flow and therefore little information value.
In \autoref{fig:robust_Vw_frequency} we show that our estimates are also robust to variation in the frequency of observation.

\subsection{Potential resolutions}

What can possibly explain the fact that the value of information that we estimate is so small compared to the fees that active asset managers charge? We next consider several potential resolutions.

\subsubsection{\label{subsub:beyond_French}Decline in fees}
One potential concern is that \textcite{french_presidential_2008}'s estimates of aggregate fees as a percentage of market capitalization are outdated. If active asset management fees have declined considerably since 2006, when French's sample ends, this could resolve the puzzle because we measure the value of information from \sampleperiod.

To address this concern, we replicate French's calculation and extend it through 2024. Appendix \ref{sec:french_replication} describes the data and methodology. Figure \ref{fig:french2008_cost_of_active} plots the resulting series. The incremental cost of active investing---the difference between what investors pay for their actual portfolios and what they would pay holding a passive market portfolio---averaged \meanincrcostactivepctfrenchsample\% of market capitalization over French's 1980--2006 sample and \meanincrcostactivepctextension\% over 2007--2024, as ETFs took share from higher-fee mutual funds, retail trading went commission-free, and indexing displaced institutional active management. The dollar cost of price discovery nevertheless kept growing, from \$7 billion in 1980 to \$135 billion in 2024, because the market grew faster than its per-dollar cost fell. 

The mean incremental cost of active investing over the \sampleperiod~ period which coincides with our value of information sample is \meanincrcostactivepctvoisample\% of market capitalization, which is still an order of magnitude larger than our estimate of the aggregate value of information.
Evidently, the fee compression observed since French's sample period does not resolve the puzzle.

\begin{figure}
\caption{The incremental cost of active investing, 1980--2024}
\label{fig:french2008_cost_of_active}
\begin{center}
\includegraphics[width=1\columnwidth]{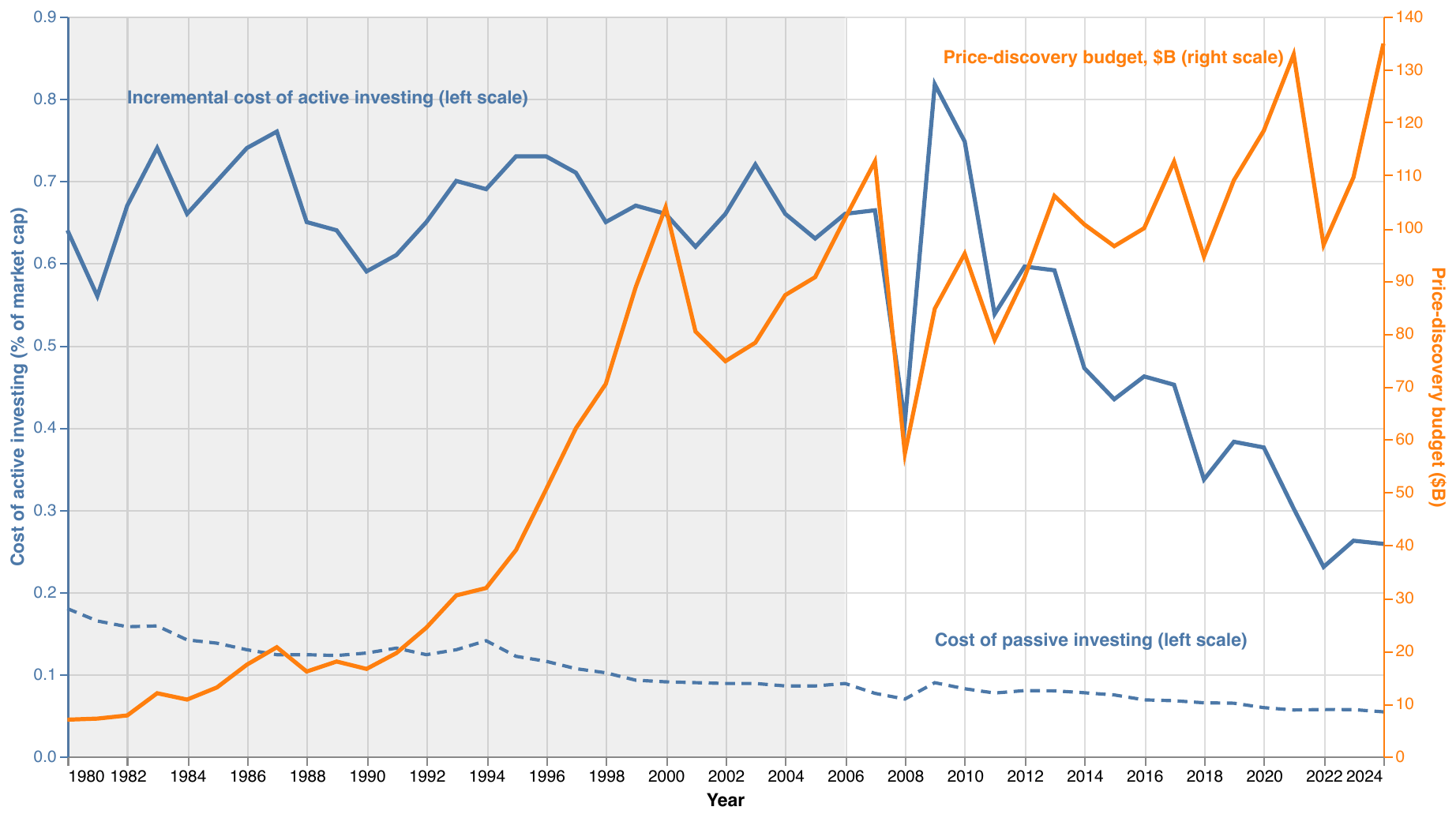}
\end{center}
\par
{\footnotesize Notes: The solid blue line is the incremental cost of active investing---the cost of actual investor portfolios minus the cost of a passive market portfolio---as a percentage of US equity market capitalization, replicating and extending \textcite{french_presidential_2008}. The dashed blue line is the cost of the passive portfolio. The orange line is the implied dollar cost of price discovery in billions (right scale). The shaded region marks French's original 1980--2006 sample, which our series reproduces exactly. Appendix \ref{sec:french_replication} describes the 2007--2024 extension.}
\end{figure}

\subsubsection{\label{subsub:pfof}Payment for order flow and price improvement}

In modern equity markets, retail brokers often receive payment for order flow (PFOF) from market makers who in addition provide traders on their platform with price improvement (a reduction in the bid-ask spread). \textcite{schwarz_actual_2025} estimate \$3.1 billion total PFOF in 2021 (equities + options) over five major brokers, with equities alone only ~\$0.5--0.7 billion since roughly two-thirds of PFOF comes from options. This is less than a basis point of stock market capitalization. They further estimate that price improvement is about 35\% of the NBBO bid-ask spread, and this figure would have to be further multiplied by the share of retail orders out of total order flow.\footnote{For example, \textcite{kadan_trading_2018} report that on the NYSE, less than 2\% of trade volume is attributed to retail traders.} %

PFOF effectively lowers price impact $dP$ for some traders, and lowers their cost of trading. The concern is that this could bias downward our measure of noise trader losses, $E \int_{0}^{T} dP_{t} dY_{t}$. Recall, however, that TAQ data include all trades, including those that receive price improvement. Thus, our measure of $dP$ is not affected by price improvement.

Finally, we note that PFOF and price improvement are not free. They are paid for by market makers like Citadel who provide them. They would not be willing to provide these payments and discounts unless they expected to at least break even on the trades they receive. How these payments and transfers occur is irrelevant. The important object to measure is the total effective losses of the side of the market that loses on average --- the uninformed noise traders, because that is all that is available for informed traders to extract. PFOF, therefore, does not resolve the puzzle.

\subsubsection{\label{subsub:other_benefits}Additional benefits of asset management}

One might argue that the fees investors pay to asset managers compensate not just for private information, but for a bundle of services: professional portfolio construction, diversification, factor or style exposures, liquidity transformation, tax management, custodial services, and convenience. If investors value these non-informational services, perhaps the gap between our estimate of the value of information and aggregate fees is not puzzling after all.

However, this argument does not resolve the puzzle. Passive index funds provide essentially the same bundle of services---diversification, factor exposure, liquidity, custody, and convenience---at dramatically lower cost. 
The incremental cost of active investing calculated by French and extended through 2024, which we report in Figure \ref{fig:french2008_cost_of_active}, represents the incremental cost investors pay specifically for active management---that is, for the promise of superior returns through private information or stock-picking skill over and above the cost of passive asset management.

Our estimates suggest that the aggregate value of information available in the market is only about \meaninfovalpctofmarketcap\% of market capitalization. This is far smaller than the active-passive cost differential, let alone the total fees charged by active managers. Thus, even if we grant that some portion of active management fees compensates for non-informational services, the portion attributable to active management remains puzzlingly large relative to the value such activity can generate.

\subsubsection{\label{subsub:some_days}Perhaps being active on some days is worth it?}

One might argue that the puzzle could be resolved if active managers strategically concentrate their information acquisition on days when the value of information is particularly high. Earnings announcement days are a natural candidate, as we documented in Section \ref{sub:infoval_earnings} that the value of information increases more than 3-fold on these days.

However, even this concentration strategy cannot bridge the gap. On earnings days, the value of information rises from \meaninfovalpctofmarketcap\% to roughly $\meaninfovalpctofmarketcap \times \experneffectonvaluebothfe \approx 0.14$\% of market capitalization. This is still considerably less than the \meanincrcostactivepctfrenchsample\% in fees that investors pay searching for superior returns.

Moreover, earnings days represent only a tiny fraction of trading days---about \meanernpct\% of stock-days in our sample. A selective strategy that captures information value only on these high-value days would therefore earn approximately $0.01 \times \experneffectonvaluebothfe \times \meaninfovalpctofmarketcap \approx 0.0014$\% of market capitalization per year---far less than the \meaninfovalpctofmarketcap\% one would earn by being active on all days. This calculation underscores that even on the most information-rich days in a firm's calendar, being active does not come close to justifying the fees charged by active asset managers.

\subsubsection{\label{subsub:informationarrival}Overnight, intraday, and dynamic information arrival}

In the Kyle-Back framework, the informed investor learns the fundamental value of the asset at the beginning of the trading day and trades on this information throughout the day. Mapping this to the real world, it is most natural to this on information arriving overnight, such as typical earnings announcements or macroeconomic releases. Our measures accurately captures the value of information in such cases using the losses of noise traders during the trading day because trading during market hours is the only way to profit from information that arrives overnight.

In reality, information may arrive at any point in time, and informed traders need to pay a cost to acquire the informative signal. Such situations are studied in \textcite{banerjee_strategic_2020} and \textcite{banerjee_dynamics_2022}. In their frameworks, informed trading is locally deterministic implying that our Assumption \ref{as:orthogonal_dX} is satisfied, and $E \int_{0}^{T} dP_{t} dY_{t}$ still measures the expected loss of noise traders. However, the interpertation of our Assumption \ref{as:competitive} is more nuanced. Our measure, $E \int_{0}^{T} dP_{t} dY_{t}$, captures the willingness to pay for information, rather than the net value after the cost of acquiring it. The net value is naturally smaller. As a result, our estimates provide an upper bound on the net gain of the informed traders in such models, making the puzzle even more pronounced.

More generally, our approach of estimating noise traders' losses, rather than attempting to estimate the value of information directly, makes the puzzle particularly robust. In Kyle–Back-type models, the losses of noise traders are ultimately the only source of profits for any market participant. Consequently, as long as our measure of noise-trader losses is valid, the puzzle remains because informed trader profits cannot exceed these losses.

\subsubsection{\label{subsub:oligopolistic_mm}Noncompetitive market makers}

Our first main assumption in deriving Proposition \ref{prop:voi_dPdY} is that the
total expected gain to informed investors equals the total expected losses to noise traders (Assumption \ref{as:competitive}). This equality holds if market makers are
competitive and earn no rents in equilibrium. It is therefore natural to consider relaxing this assumption. Doing so, however, would deepen the puzzle. 

Suppose that instead, market makers have some market power and charge a markup. In such a market, market makers receive the same amount of losses from noise traders, but pass only part of them to informed investors. As a result, aggregate losses by noise traders provide an upper bound on the value of information similar to the discussion in the previous sub-section.

\textcite{glode_asymmetric_2016} microfound such rents. In their model, trade occurs through chains of intermediaries who make take-it-or-leave-it offers under asymmetric information, and each intermediary captures part of the gains from trade as compensation for mitigating adverse selection. Their model also illustrates a second channel: when informational gaps are large, trade breaks down with positive probability, so part of the cost that asymmetric information imposes on the uninformed is a deadweight loss rather than a transfer to the informed. Both channels imply that the losses of the uninformed can overstate the gains of the informed, reinforcing the conclusion that our estimate bounds the value of information from above.

\subsubsection{\label{subsub:oligopolistic_am}Noncompetitive asset managers}

The aggregate fees estimated by French simply measure aggregate fees paid by investors, and do not assume anything about the competitiveness of asset management. The basic intuition, formalized in Equation \eqref{eq:voi_geq_fees}, suggests that the value of information is bounded from below by the fees paid by investors to the asset managers in equilibrium, regardless of the competitiveness of the asset management. This intuition is clearly violated by our estimates.

\subsubsection{\label{subsub:limited_arbitrage}Limited arbitrage, inventory costs, and risk averse market makers}

In the Kyle-Back framework, market makers are unconstrained in their ability to arbitrage the asset. In reality, market makers are constrained by inventory risk and search costs.

For example, \textcite{kondor_liquidity_2019} study the dynamic interactions between liquidity providers' capital, the liquidity that these agents provide to other participants, and assets' risk premia. They show that in a framework with minimal frictions, in particular, no asymmetric information or leverage constraints, risk averse market makers (arbitrageurs) would set positive price impact coefficients ($\lambda$) that decrease with their wealth. That is, even if no asymmetric information is present, price changes could still covary with order flow, and we could misattribute this to the value of information.

We are aware of no theoretical work that combines their model with asymmetric information, but it is plausible that in such an extension of their model, the aggregate losses of noise traders must still account for the most that market makers can transfer to informed investors, after being compensated for these additional costs. In this case too, we would underestimate the value of information, and the puzzle would deepen.

\subsubsection{\label{subsub:partial_info}Partial information and multiple informed investors}

In the \textcite{kyle_continuous_1985} and \textcite{back_insider_1992} framework, a monopolistic informed investor learns the fundamental value of the asset perfectly, without noise. It is arguably more realistic that the informed receive only a noisy signal about the asset's fundamental value. Moreover, multiple investors can simultaneously acquire information about the asset. Would such features change the interpretation of our estimates?

\textcite{back_imperfect_2000} consider a more general model in which multiple investors receive partial and potentially correlated signals about an asset's fundamental value. Although this model is harder to solve and generates different dynamics of trade and prices, it generates the same expression for the aggregate losses of the uninformed noise traders (Equation \eqref{eq:EdPdZ}) and for the aggregate value of information to all informed investors combined \parencite[][Corollary 3]{back_imperfect_2000}. Therefore, our main result in Proposition \ref{prop:voi_dPdY} holds in this more general setting.

Intuitively, even with multiple informed investors and partial signals, the gains to the informed come entirely from the losses of the noise traders. Therefore, our approach of focusing on the losses of the noise traders allows us to abstract from these specific model features.

\subsubsection{\label{subsub:assumption2_violation}Discrete time and relaxing  Assumption \ref{as:orthogonal_dX}}

Assumption \ref{as:orthogonal_dX} holds exactly in continuous-time Kyle-Back models because informed trading is smooth relative to noise trading. But what if we consider a more general discrete-period setting where this assumption may be violated? Here we show that even in such cases, our measure provides an \emph{upper bound} on the value of information.

Consider a discrete-time analog where trading occurs over $N$ periods, similar to \textcite{kyle_continuous_1985} and \textcite{foster_strategic_1996}. Let $\Delta P_n = P_n - P_{n-1}$, $\Delta Y_n = Y_n - Y_{n-1}$, $\Delta X_n = X_n - X_{n-1}$, and $\Delta Z_n = Z_n - Z_{n-1}$ denote the changes in prices, order flow, informed trading, and noise trading in period $n$. Assume that price changes are linear in order flow:
\begin{equation}
\Delta P_n = \lambda_n \Delta Y_n, \label{eq:discrete_price}
\end{equation}
for some positive price impact coefficient $\lambda_n > 0$. We further assume that noise trading innovations are uncorrelated with informed trading decisions:
\begin{equation}
E[\Delta Z_n \Delta X_n] = 0. \label{eq:orthogonal_ZX}
\end{equation}
This assumption is natural: noise traders' liquidity needs are exogenous and independent of informed traders' strategies, which depend on private information about fundamentals.

Due to Assumption \ref{as:competitive}, the expected profit to informed traders over all periods can be written as:
\begin{align}
\Omega &= E\sum_{n=1}^{N} \Delta P_n \Delta Z_n \nonumber \\
&= E\sum_{n=1}^{N} \Delta P_n (\Delta Y_n - \Delta X_n) \nonumber \\
&= E\sum_{n=1}^{N} \Delta P_n \Delta Y_n - E\sum_{n=1}^{N} \Delta P_n \Delta X_n. \label{eq:discrete_decomp}
\end{align}
The first term is precisely what we measure empirically. The second term captures the extent to which Assumption \ref{as:orthogonal_dX} is violated in discrete time.

Using the linear pricing rule \eqref{eq:discrete_price}, the second term becomes:
\begin{align}
E\sum_{n=1}^{N} \Delta P_n \Delta X_n &= E\sum_{n=1}^{N} \lambda_n \Delta Y_n \Delta X_n \nonumber \\
&= E\sum_{n=1}^{N} \lambda_n (\Delta X_n + \Delta Z_n) \Delta X_n \nonumber \\
&= E\sum_{n=1}^{N} \lambda_n (\Delta X_n)^2, \label{eq:violation_term}
\end{align}
where the last equality follows from the orthogonality condition \eqref{eq:orthogonal_ZX}. Since $\lambda_n > 0$, this term is non-negative. Therefore:
\begin{equation}
  \Omega = E\sum_{n=1}^{N} \Delta P_n \Delta Y_n - E\sum_{n=1}^{N} \lambda_n (\Delta X_n)^2 \le E\sum_{n=1}^{N} \Delta P_n \Delta Y_n. \label{eq:upper_bound}
\end{equation}

This establishes that our empirical measure $E\sum \Delta P_n \Delta Y_n$ provides an upper bound on the true value of information even when Assumption \ref{as:orthogonal_dX} does not hold exactly. If anything, violating this assumption would make the puzzle even deeper: the true value of information would be even smaller than our already modest estimates suggest.

Moreover, the ``error term'' in \eqref{eq:violation_term} vanishes as the time step shrinks. In Kyle-style models, informed trading increments $\Delta X_n$ are of order $\Delta t$ (smooth), so $(\Delta X_n)^2$ is of order $(\Delta t)^2$, which becomes negligible relative to $\Delta P_n \Delta Y_n$ (of order $\Delta t$) as $\Delta t \to 0$. This provides additional intuition for why Assumption \ref{as:orthogonal_dX} holds exactly in continuous time and approximately in high-frequency discrete time.

\subsubsection{\label{subsub:jumps}Jumps in prices}

The continuous-time Kyle-Back framework assumes that prices follow diffusion processes without jumps. In reality, asset prices can exhibit discontinuous jumps, particularly around news events or market stress. Importantly, the discrete-time analysis developed in Section \ref{subsub:assumption2_violation} implicitly accommodates such jumps. There, we only require that price changes satisfy $\Delta P_n = \lambda_n \Delta Y_n$ for some positive $\lambda_n$. This relation places no restriction on the magnitude or continuity of price changes between periods. Prices may jump discontinuously from $P_{n-1}$ to $P_n$, and the upper bound in \eqref{eq:upper_bound} still applies. Therefore, the result---that $E\sum \Delta P_n \Delta Y_n$ provides an upper bound on the value of information---extends naturally to settings with price jumps.

This generalization is empirically relevant because our high-frequency estimates aggregate price changes and order flow over discrete time intervals (e.g., one-minute windows). Within each interval, prices may exhibit jumps due to news arrivals or large trades. The discrete-time framework ensures that our estimates remain valid upper bounds on the value of information even in the presence of such discontinuities.

\subsubsection{\label{sub:multiasset}Multiple assets with cross-asset information}%

A natural concern with our estimates is that they are built asset by asset. We
measure, for each stock $i$, the covariation between its price change and its own
order flow, and sum across stocks. Yet information is rarely about a single
asset in isolation. A signal about firm $i$ is typically informative about the fundamental value of
many firms $j\neq i$ at once. When an informed investor pays once for a signal and
trades a whole portfolio, does our stock-by-stock measure capture the full value
of that information, or does it miss the cross-asset component? We show that, under
the multi-asset analogs of Assumptions
\ref{as:competitive} and \ref{as:orthogonal_dX}, the aggregation of the stock-by-stock measure is exactly
right: summing each stock's own price-change/order-flow covariation recovers the
\emph{total} value of information, including all cross-asset spillovers. The
spillovers do not have to be measured separately because each stock's price
already responds to the entire vector of order flows.

Intuitively, what we measure empirically is the total losses to noise traders across all stocks. These losses must cover both types of gains to informed traders: (i) gains from trading in a stock $i$ based on a signal on stock $i$'s fundamental value; and (ii) gains from trading in stock $i$ based on a signal on stock $j\neq i$. As long as market makers break even in aggregate, summing over all $i$, the expected total losses to noise traders must equal the expected total value of information across all stocks. 

\paragraph*{Setup}

There are $N$ assets traded over $[0,T]$ with a random fundamental value vector
$v=(v_{1},\dots,v_{N})'$ drawn from a joint distribution $F$. We place no
restriction on $F$; in particular the components of $v$ may be arbitrarily
correlated, so that a signal about one coordinate is informative about the
others---this is the cross-asset informativeness we wish to capture. One or more
informed traders observe signals about $v$ at $t=0$ and choose a vector of
cumulative orders $X_{t}=(X_{1t},\dots,X_{Nt})'$, a continuous It\^o process.
Noise traders submit a cumulative order vector $Z_{t}$ with
\begin{equation}
dZ_{t}=\Sigma_{z}(t)^{1/2}\,dB_{t},\nonumber
\end{equation}
where $B_{t}$ is an $N$-dimensional standard Wiener process and $\Sigma_{z}(t)$
is the (possibly stochastic) instantaneous covariance matrix of noise trading.
Off-diagonal entries of $\Sigma_{z}$ allow correlated liquidity demand across
assets. Market makers observe the full order
flow vector $Y_{t}=X_{t}+Z_{t}$ and set the price vector
$P_{t}=(P_{1t},\dots,P_{Nt})'$, a continuous It\^o process. Crucially, the price
of asset $i$ may depend on the order flow of \emph{all} assets, since order flow
in asset $j$ can be informative about $v_{i}$.

\paragraph*{The value of information}

A noise trader buying $dZ_{it}$ shares of asset $i$ when its price moves by
$dP_{it}$ loses $dP_{it}\,dZ_{it}$; losses accrue only on the asset actually
traded. Summing over assets, the total expected loss of all noise traders is
$E\int_{0}^{T}\sum_{i}dP_{it}\,dZ_{it}=E\int_{0}^{T}dP_{t}'\,dZ_{t}$. Competitive
market making aggregated across the $N$ markets then gives the multi-asset analog
of Assumption \ref{as:competitive}.

\begin{assumption}[Competitive market making, multi-asset]\label{as:competitive_N}
The total value of information to all informed traders combined equals the total
expected loss to all noise traders,
\begin{equation}
\Omega=E\int_{0}^{T}dP_{t}'\,dZ_{t}
      =E\int_{0}^{T}\sum_{i=1}^{N}dP_{it}\,dZ_{it}.\nonumber
\end{equation}
\end{assumption}

As in the single-asset case, this is a wealth-conservation identity: fundamental
value is redistributed among informed traders, noise traders, and market makers,
and competitive (zero-profit) market making across all markets makes informed
gains equal noise losses in aggregate. This is a weak condition as it imposes zero market-making profits in \emph{aggregate} rather than market by market.

The orthogonality assumption generalizes in a straightforward manner. As long as each informed
holding $X_{it}$ is smooth---of order $dt$ rather than $\sqrt{dt}$---its quadratic
covariation vanishes against \emph{any} It\^o price process, own or cross.

\begin{assumption}[Orthogonal informed trading, multi-asset]\label{as:orthogonal_N}
Informed trading is orthogonal to prices in the sense that the matrix quadratic
covariation of $X$ and $P$ is zero,
\begin{equation}
\int_{0}^{T}dX_{it}\,dP_{jt}=0\qquad\text{for all }i,j.\nonumber
\end{equation}
\end{assumption}

Assumption \ref{as:orthogonal_N} thus holds for the same reason as Assumption \ref{as:orthogonal_dX}: smooth informed trading does not contribute to the diffusive part of any price. Note that only the diagonal
entries $[X_{i},P_{i}]_{T}=0$ are needed for the result below; the off-diagonal
entries vanish too, which we use only for interpretation.

\begin{proposition}[Value of information, multi-asset]\label{prop:voi_dPdY_N}
Under Assumptions \ref{as:competitive_N} and \ref{as:orthogonal_N}, the total
value of information over an interval of length $T$ is
\begin{equation}
\Omega=E\int_{0}^{T}dP_{t}'\,dY_{t}
      =\sum_{i=1}^{N}E\int_{0}^{T}dP_{it}\,dY_{it}.\nonumber
\end{equation}
\end{proposition}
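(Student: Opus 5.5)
The plan mirrors the single-asset argument behind Proposition \ref{prop:voi_dPdY}, applied coordinate by coordinate. I start from Assumption \ref{as:competitive_N}, which writes $\Omega$ as $E\int_0^T\sum_i dP_{it}\,dZ_{it}$. The quadratic covariation is bilinear and the sum over $i$ is finite. So I can write
\begin{equation}
\int_0^T dP_t'\,dZ_t=\sum_{i=1}^N [P_i,Z_i]_T . \nonumber
\end{equation}
Next I substitute $Z_{it}=Y_{it}-X_{it}$ in each coordinate, which is valid because $Y_t=X_t+Z_t$ holds as vectors. Bilinearity then gives $[P_i,Z_i]_T=[P_i,Y_i]_T-[P_i,X_i]_T$.

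The key step invokes the diagonal entries of Assumption \ref{as:orthogonal_N}, namely $[X_i,P_i]_T=0$ for each $i$. This kills the second term, leaving
\begin{equation}
\int_0^T dP_t'\,dZ_t=\sum_{i=1}^N\int_0^T dP_{it}\,dY_{it}=\int_0^T dP_t'\,dY_t \nonumber
\end{equation}
pathwise, almost surely. As the paper remarks after Assumption \ref{as:orthogonal_N}, the off-diagonal entries are not needed. Cross-asset information enters only through the fact that each $P_{it}$ may depend on the whole vector $Y_t$. That dependence is already contained in $dP_{it}$, so no cross terms $[P_i,Y_j]$ with $i\neq j$ appear. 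I will state this explicitly, because it is the economic content: the spillovers are priced into each own-asset covariation.

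Finally, I take expectations and use linearity of $E$ over the finite sum. This gives $\Omega=E\int_0^T dP_t'\,dY_t=\sum_i E\int_0^T dP_{it}\,dY_{it}$.

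The only delicate point is integrability, since splitting the expectation of a sum into a sum of expectations requires each $E[P_i,Y_i]_T$ to be finite. I would handle it as follows. $\Omega$ is assumed finite, so $E[P_i,Z_i]_T$ is well defined. By the Kunita--Watanabe inequality, $|[P_i,Y_i]_T|\le([P_i]_T[Y_i]_T)^{1/2}$, and $[Y_i]_T=[Z_i]_T$ under smooth informed trading. I would therefore add, as a mild regularity condition implicit in the setup, that $E[P_i]_T$ and $E\int_0^T\Sigma_{z,ii}(t)\,dt$ are finite. Cauchy--Schwarz then makes each term integrable. Beyond this, I expect no real obstacle: the proof is the identity $[P_i,Z_i]=[P_i,Y_i]$ applied $N$ times and then summed.
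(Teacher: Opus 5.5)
Your proposal is correct and follows the paper's proof essentially line for line. You substitute $dZ_t=dY_t-dX_t$, use bilinearity, eliminate $[P_i,X_i]_T$ with the diagonal of Assumption \ref{as:orthogonal_N}, then take expectations and invoke Assumption \ref{as:competitive_N}. Your integrability remark covers a step the paper leaves implicit, splitting $E\sum_i$ into $\sum_i E$, but the detour through $[Y_i]_T=[Z_i]_T$ is unnecessary and uses smoothness of $X$ beyond the literal assumption: since $[P_i,Y_i]_T=[P_i,Z_i]_T$ pathwise, apply Kunita--Watanabe and Cauchy--Schwarz directly to $[P_i,Z_i]_T$ with $[Z_i]_T=\int_0^T(\Sigma_z(t))_{ii}\,dt$.
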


\begin{proof}
Write $dZ_{t}=dY_{t}-dX_{t}$. Then
\begin{equation}
\int_{0}^{T}dP_{t}'\,dZ_{t}
=\int_{0}^{T}dP_{t}'\,dY_{t}-\int_{0}^{T}dP_{t}'\,dX_{t}
=\int_{0}^{T}dP_{t}'\,dY_{t},\nonumber
\end{equation}
where the last equality uses the diagonal of Assumption \ref{as:orthogonal_N}.
Taking expectations and invoking Assumption \ref{as:competitive_N} gives the
result.
\end{proof}

Proposition \ref{prop:voi_dPdY_N} is the multi-asset counterpart of Proposition
\ref{prop:voi_dPdY}, and it has a sharp empirical implication. The right-hand side
is exactly our estimator summed across stocks: for each stock $i$ we compute the
covariation between its \emph{own} price change and its \emph{own} order flow, and
we add up over stocks. No cross-asset covariation $\int dP_{it}\,dY_{jt}$ with
$i\neq j$ ever appears. The cross-asset information channel is nonetheless fully
counted, because $dP_{it}$ already embeds the response of asset $i$'s price to the
order flow of all assets.

\subsubsection{\label{subsub:risk_aversion}Risk averse informed investors}

In Kyle-Back models, informed investors are risk neutral. By contrast, informed investors are risk averse in \textcite{subrahmanyam_risk_1991}'s extension of the Kyle model, and in noisy rational expectations models in the tradition of \textcite{grossman_impossibility_1980}. How much would risk averse informed investors be willing to pay in this environment? Well, the aggregate losses of the noise traders would still measure the expected gain in monetary terms from superior information. But due to risk aversion and Jensen's inequality, the informed traders would be willing to pay less than their expected gain. Thus, higher risk aversion would lower the value of information and further deepen the puzzle. 

\subsubsection{\label{subsub:sdf}Correlation with the marginal value of wealth}

The preceding argument shows that risk aversion reduces the certainty-equivalent value of risky trading profits below their physical-measure expectation. But risk aversion has a second channel that can work in the opposite direction. If informed trading profits covary positively with the marginal value of wealth---that is, if profits are higher in states when consumption is scarce---then the risk-adjusted value of these profits can exceed their physical-measure expectation.\footnote{%
Valuing information with state prices relates to the probability-pricing framework of \textcite{davila_probability_2025}, who express the willingness-to-pay for marginal changes in probabilities---including changes in public and private information---as an asset-pricing formula with hypothetical cash flows.}

To formalize this, let $M_t>0$ denote a stochastic discount factor (SDF) assumed to be a continuous It\^o process such that $E(M_t)\le1$ for all $t$ (i.e., the risk-free rate is assumed non-negative). Denote by $\Omega^{M}$ the risk-adjusted value of information, which by Assumption \ref{as:competitive} equals the expected risk-adjusted losses to noise traders
\begin{equation} \label{eq:omegaM}
  \Omega^{M} = E\int_0^T M_t \, dP_t \, dZ_t.
\end{equation}

Note that what we measure empirically is $\Omega$, the undiscounted risk-neutral value of information. When $M_t$ is identically 1, $\Omega$ and $\Omega^{M}$ coincide. More generally, the next proposition establishes a relation between $\Omega$ and $\Omega^{M}$.

\begin{proposition}[Value of information with a stochastic discount factor]\label{prop:voi_SDF}
  Under Assumptions \ref{as:competitive}--\ref{as:linear_dP},  the value of information in the presence of a stochastic discount factor $M_{t}$
over an interval of length $T$ satisfies
\begin{align}
  \Omega^{M} &\le  \Omega + \int_0^T \text{Cov}(M_t, \, dP_t \, dY_t). \label{eq:sdf_decompose}
\end{align}
\end{proposition}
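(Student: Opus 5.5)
The plan is to reduce $\Omega^M$ to an expression in $dP_t\,dY_t$, split the expectation of a product into a product of expectations plus a covariance, and then use $E(M_t)\le 1$ together with nonnegativity of the expected losses.

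\textbf{Step 1: replace $dZ_t$ by $dY_t$.} Start from the definition \eqref{eq:omegaM}. Use the same pathwise substitution as in Proposition \ref{prop:voi_dPdY}: write $dZ_t=dY_t-dX_t$. By Assumption \ref{as:orthogonal_dX}, and since $X_t$ is locally deterministic under Assumption \ref{as:diffusive_trading_strategies}, the term $M_t\,dP_t\,dX_t$ contributes nothing to the quadratic covariation. Integrating a bounded-variation (locally $dt$) process against $dP_t$ with weight $M_t$ still yields zero covariation. This gives
\begin{equation}
\Omega^M=E\int_0^T M_t\,dP_t\,dY_t. \nonumber
\end{equation}

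\textbf{Step 2: make the integrand explicit.} By Assumptions \ref{as:diffusive_trading_strategies} and \ref{as:linear_dP}, and by the computation in the proof of Proposition \ref{prop:voi_Elambda_sigmay},
\begin{equation}
dP_t\,dY_t=\lambda_t\sigma_y^2\,dt. \nonumber
\end{equation}
So $\Omega^M=\int_0^T E(M_t\lambda_t)\sigma_y^2\,dt$. Exchanging the expectation and the time integral requires Fubini, justified by integrability of $M_t\lambda_t$, which I will assume.

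\textbf{Step 3: split the expectation.} Pointwise in $t$, write
\begin{equation}
E(M_t\,dP_t\,dY_t)=E(M_t)\,E(dP_t\,dY_t)+\text{Cov}(M_t,dP_t\,dY_t). \nonumber
\end{equation}
Here $E(dP_t\,dY_t)=E(\lambda_t)\sigma_y^2\,dt$.

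\textbf{Step 4 (main obstacle): bound the first term.} I need
\begin{equation}
E(M_t)\,E(dP_t\,dY_t)\le E(dP_t\,dY_t), \nonumber
\end{equation}
which follows from $E(M_t)\le 1$ provided $E(dP_t\,dY_t)\ge 0$. This nonnegativity is where Assumption \ref{as:linear_dP} is essential: $\lambda_t>0$ gives $E(\lambda_t)\sigma_y^2\ge0$ pointwise in $t$. Without linear prices, only the integrated quantity $\Omega$ would be known to be nonnegative, and a pointwise bound could fail. Integrating over $[0,T]$ then gives
\begin{equation}
\int_0^T E(M_t)\,E(dP_t\,dY_t)\le E\int_0^T dP_t\,dY_t=\Omega, \nonumber
\end{equation}
where the last equality is Proposition \ref{prop:voi_dPdY}. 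The covariance term integrates to $\int_0^T\text{Cov}(M_t,dP_t\,dY_t)$, which yields \eqref{eq:sdf_decompose}.

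Finally, I will note that the $\text{Cov}$ term is to be read as $\text{Cov}(M_t,\lambda_t)\sigma_y^2\,dt$, so that the statement is well defined.
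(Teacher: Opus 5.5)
Your proposal is correct and follows the paper's proof essentially step for step: drop the $M_t\,dP_t\,dX_t$ term because $X_t$ has no stochastic part, write $dP_t\,dY_t=\lambda_t\sigma_y^2\,dt$, interchange expectation and time integral by Fubini, split $E[M_t\lambda_t\sigma_y^2]$ into $E[M_t]\,E[\lambda_t\sigma_y^2]$ plus a covariance, and bound the first piece using $E(M_t)\le 1$ and $\lambda_t>0$. One small refinement: since $M_t\lambda_t\sigma_y^2\ge 0$, the interchange holds by Tonelli without any integrability assumption, but the covariance split still needs $E[M_t]$, $E[\lambda_t]$ and $E[M_t\lambda_t]$ to be finite, so that no $\infty-\infty$ expression arises.
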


\noindent \textit{Proof:} See Appendix.

The first term on the right-hand-side of (\ref{eq:sdf_decompose}) is the risk-neutral value of information, which we measure empirically. The second term captures whether noise trader losses tend to be larger in high marginal utility states. When this term is zero (e.g., when $dP_tdY_t$ is non-stochastic), we get $\Omega^M\le\Omega$, which is essentially the result of the previous section, noting that risk-aversion is reflected here in the condition $E(M_t)\leq1$, which implies a positive risk free rate. By contrast, when the second term is positive, the risk-adjusted value of information is higher than the risk-neutral value of information. Figure~\ref{fig:infovalpctofmarketcap} is suggestive of this channel: aggregate noise trader losses are elevated during the 2008 financial crisis and the COVID-19 pandemic, periods in which the marginal value of wealth is plausibly high. The question is whether the second term, reflecting a risk adjustment, can be sufficiently positive to resolve the value-of-information puzzle.

To answer this question, we now bound how large this risk adjustment can be using a standard asset pricing calibration exercise. Applying the Cauchy-Schwarz inequality to (\ref{eq:sdf_decompose}), we have

\begin{align}
  \Omega^{M} &\le  \Omega + \int_0^T \sigma(M_t) \, \sigma(dP_t \, dY_t)dt. \label{eq:sdf_decompose1}
\end{align}

The asset pricing literature provides estimates of the entropy of the SDF in leading representative-agent models. Under the common simplifying assumption that the SDF is log-normal, $\log M_t \sim \mathcal{N}(\mu_t,\sigma^2_{\log M, t})$, the coefficient of variation of $M_t$ is $\sigma(M_t)/E[M_t] = \sqrt{e^{\sigma^2_{\log M, t}}-1}$. The corresponding entropy of the SDF in the sense of \textcite{alvarez_using_2005} is $L(M_t) := \log E[M_t] - E[\log M_t] = \sigma^2_{\log M, t}/2$, so that $\sigma(M_t)/E[M_t] = \sqrt{e^{2 L(M_t)}-1}$. Substituting in (\ref{eq:sdf_decompose1}) and using $E[M_t]\le 1$ yields the bound,
\begin{align}
  \Omega^{M} &\le \Omega + \sigma(\Omega)\,\sqrt{e^{2 L(M)}-1}, \label{eq:sdf_decompose2}
\end{align}
where $\sigma(\Omega) := \sigma\bigl(dP_t\,dY_t\bigr)$ denotes the standard deviation of cumulative informed-trading profits over the horizon $T$. We focus on estimates of $L(M)$ that are admissible by data on equity premia and bond yields. \textcite{backus_sources_2014} report annualized entropy in two leading specifications jointly consistent with the U.S. equity premium and the small mean yield spread. The first is the \textcite{campbell_by_1999} habit model (their Table III, column 3), which has $L(M)\approx 0.28$ per year, equivalently $\sqrt{e^{2\cdot 0.28}-1}\approx 0.86$ annually. The second is an iid-jump specification with recursive utility (their Table IV, column 1), which has the highest entropy among admissible models, with $L(M)\approx 0.58$ per year, equivalently $\sqrt{e^{2\cdot 0.58}-1}\approx 1.48$ annually.

We estimate the standard deviation of the value of information as a percentage of market capitalization, on an annual basis, at $\stdinfovalpctofmarketcapw$ based on the winsorized series and at $\stdinfovalpctofmarketcap$ based on the unwinsorized series. Plugging into (\ref{eq:sdf_decompose2}) with the largest admissible value of entropy, the maximum risk adjustment is bounded above by $\stdinfovalpctofmarketcap \times 1.48 \approx 0.04$ percentage of market capitalization. Combining with our estimate of $\Omega \approx \meaninfovalpctofmarketcap\%$, the implied upper bound on the risk-adjusted value of information is approximately $0.08\%$ of market capitalization---of the same order as $\Omega$ itself, and an order of magnitude smaller than the $\meanincrcostactivepctfrenchsample\%$ estimated by \textcite{french_presidential_2008}. The SDF covariance channel therefore points in the right direction but cannot quantitatively resolve the puzzle.

\subsubsection{\label{subsub:irrational}Behavioral explanations}

Our discussion above suggests that solving the puzzle requires an additional source of revenue for market makers that would compensate them for losses to informed traders, even if price impact is as low as our estimates suggest.
\textcite{treynor_only_1971} writes about a third type of traders that is potentially missing from the Kyle-Back framework---"transactors acting on information which they believe has not yet been fully discounted in the market price but which in fact has." \textcite{treynor_only_1971} explains,
\begin{quote}
    This is where the third kind of transactor comes in: From the market maker's point of view, his effect is identical to the liquidity-motivated transactor's. The market maker naturally welcomes the cooperation of wire houses and information services like the Wall Street Journal that broadcast information already fully discounted since many investors are easily persuaded to transact based on that information, hence enable the market maker to maintain substantially smaller spreads than would be possible without their trading activity.
\end{quote}

While \textcite{treynor_only_1971} thought of such traders as identical in their effect on the market to noise traders, one may imagine their misunderstanding of their position in the information supply chain,  may lead them to trade in the wrong direction on average. In this case, such behavioral investors would lose more than the bid-ask spread in each trade. 

Large systematic trading mistakes are a potential explanation for our puzzling estimates. In such an environment, the aggregate losses that are shared by market makers and informed traders are composed of both trading costs which we measure and trading mistakes which we do not. 

\textcite{barber_just_2009} quantify the losses of individual investors using complete transaction data from Taiwan. They find that individual investors lose approximately 2.3\% of aggregate market capitalization annually between 1995 and 1999. Of this, gross trading losses---the pure stock-picking losses from buying stocks that underperform stocks sold---amount to 0.6\% of market capitalization annually, with the remainder attributable to commissions, transaction taxes, and market-timing losses. \textcite{andries_financial_2025} study affluent French brokerage clients and find that unadvised clients generate a $-4.7\%$ annual risk-adjusted portfolio return. 

Evidence from US markets provides mixed evidence of retail investor performance. \textcite{kaniel_individual_2008} document positive excess returns in the month following intense buying by individuals and negative excess returns after individuals sell. \textcite{barber_attention-induced_2022} find that intense buying by Robinhood users forecasts negative returns, with average 20-day abnormal returns of $-4.7\%$ for the top stocks purchased each day. \textcite{barber_resolving_2024} show that retail purchases are concentrated in attention-grabbing stocks that subsequently underperform, with long-short strategies based on retail order imbalance yielding $-14.8\%$ annualized returns among stocks with heavy retail trading. While these studies document systematic trading mistakes by retail investors, they do not provide aggregate estimates scaled by market capitalization. Whether such losses are large enough to explain our puzzling estimates remains an open question.

\section{\label{sec:Conclusion}Conclusion}

We develop a simple, general closed-form expression for the value of information in financial markets. Under mild assumptions satisfied by a broad class of Kyle-Back models---competitive market making and orthogonal informed trading---the value of information equals the expected quadratic covariation between price changes and order flow. This expression is easy to estimate using high-frequency data and has straightforward economic units.

Using intraday TAQ data on US\ equities over \sampleperiod, we estimate that the average annualized value of information is approximately \$\EVwannualUSDm~million per stock. In aggregate, the value of information amounts to only \meaninfovalpctofmarketcap\% of market capitalization. Our estimates pass several validation tests: the value of information rises during turbulent periods, is higher for large, growth, and momentum stocks, and increases more than 3-fold on earnings announcement days. Moreover, the value of information is higher when retail attention and informed trading are higher.

Our findings present a puzzle. The aggregate value of information we estimate is considerably smaller than the approximately \meanincrcostactivepctfrenchsample\% of market capitalization that investors pay in fees to active asset managers each year \parencite{french_presidential_2008}. Even accounting for fee compression over time and focusing on the active-passive fee differential, this gap remains substantial. We consider several potential resolutions---including non-competitive market makers, risk aversion, partial information, and behavioral trading mistakes---but find that most would either deepen the puzzle or leave it unresolved.

One interpretation of our findings is that the equity market is highly informationally efficient, leaving little value for informed traders to extract. An alternative interpretation is that a significant portion of active management fees compensates for services unrelated to information, or that behavioral investors systematically lose more than the trading costs we measure. 
The wedge we document between the value of information and the cost of seeking it guides the construction of future theoretical models.

\clearpage

\begin{singlespace}
\printbibliography 
\end{singlespace}

\clearpage{}

\appendix

\setcounter{section}{0}
\setcounter{table}{0}
\setcounter{figure}{0}

\renewcommand{\theequation}{A.\arabic{equation}} \renewcommand{%
\thetable}{A.\arabic{table}} \renewcommand{\thefigure}{A.\arabic{figure}}

\section{Appendix}

\subsection{\label{sec:proofs}Proof of Proposition \ref{prop:voi_SDF}}

By (\ref{eq:omegaM}), 

\begin{align}
  \Omega^{M} &= E\int_0^T M_t \, dP_t \, dZ_{t} \nonumber \\
  &= E\int_0^T M_t \, dP_t \, (dY_t-dX_t) \nonumber \\
  &=E\int_0^T M_t \, dP_t \, dY_t-E\int_0^T M_t \, dP_t \, dX_t \nonumber \\
  &=E\int_0^T M_t \, dP_t \, dY_t,\label{eq:MdPdY}
\end{align}
where the last equality follows by Assumption \ref{as:diffusive_trading_strategies}, which implies that $M_{t}dP_{t}dX_t$ is zero, because $X_t$ has no stochastic term.

By Assumption \ref{as:linear_dP},
\begin{equation}
dP_t \, dY_t=\lambda_{t} (dY_t)^2 = \lambda_{t} \sigma_y^2\,dt, \label{eq:dPdY1}
\end{equation}
which is locally deterministic. 

The integral in \eqref{eq:MdPdY} is therefore a pathwise Lebesgue-Stieltjes integral, and \eqref{eq:MdPdY} can be written as 
\begin{equation}
  \Omega^{M}  =E\int_0^T M_t\,\lambda_t\,\sigma_y^2\,dt.
\end{equation}

By Fubini's theorem for Lebesgue integrals,

\begin{equation}
  \Omega^{M}  =\int_0^T E[M_t\,\lambda_t\,\sigma_y^2]dt.
\end{equation}

Now, at each fixed~$t$, $M_t$ and $\lambda_t\sigma_y^2$ are random variables over the relevant probability space, so the identity $E[XY]=E[X]\,E[Y]+\text{Cov}(X,Y)$ applied pointwise yields 

\begin{align}
  \Omega^{M}&=\int_0^T E[M_t]\,E[\lambda_t\sigma_y^2]dt+\int_0^T\text{Cov}(M_t,\lambda_t\sigma_y^2)dt\nonumber \\ 
  &\le\int_0^T E[\lambda_t\sigma_y^2]dt+\int_0^T\text{Cov}(M_t,\lambda_t\sigma_y^2)dt \nonumber \\
  &=\int_0^T E[dP_{t}dY_{t}]+\int_0^T\text{Cov}(M_t,dP_{t}dY_{t}) \nonumber \\
  &=E\int_0^T dP_tdY_t+\int_0^T\text{Cov}(M_t,dP_{t}dY_{t}), \nonumber \\
  &=\Omega+\int_0^T\text{Cov}(M_t,dP_{t}dY_{t}), \nonumber
\end{align}
where the inequality follows because $E(M_t)\le1$ and $\lambda_t>0$ (by Assumption \ref{as:linear_dP}), and the penultimate equality follows by applying Fubini's theorem to the left term. \hfill $\square$

\clearpage

\subsection{\label{sec:robustness}Robustness}

Here we include some additional robustness tests.

To study the sensitivity of our estimates to how we estimate price impact, in \autoref{fig:robust_Vw} we report the mean value of information across days and stocks for \signalgos. None of these have a major effect on the mean value of information.

Our baseline analysis omits penny stocks with less than \$5 in price. The figure further shows that including such stocks would reduce the mean value of information. This is intuitive as penny stocks likely have large price impact coefficients.

In \autoref{fig:robust_Vw_frequency} we show that our estimates of the value of information rise only slightly with the sampling interval between observations. As we show in Section \ref{subsub:assumption2_violation}, 
the error term in \eqref{eq:violation_term} vanishes as the time step shrinks. Therefore, higher frequency estimates provide a better approximation to the value of information, while lower frequency estimates are biased upwards. The figure shows that even at 30-minute frequency we estimate a modest upper bound on the value of information.

\autoref{fig:robust_regcoef} reports similar robustness tests, but this time for the regression analysis reported in the baseline regression tables in the paper (e.g. \autoref{tbl:reg_ead}). Again, we find that our conclusions are robust to these alternative specifications.

\begin{figure}[b!]
\caption{\label{fig:robust_Vw}Robustness: Mean value of information}
\begin{center}
\includegraphics[width=0.70\textwidth]{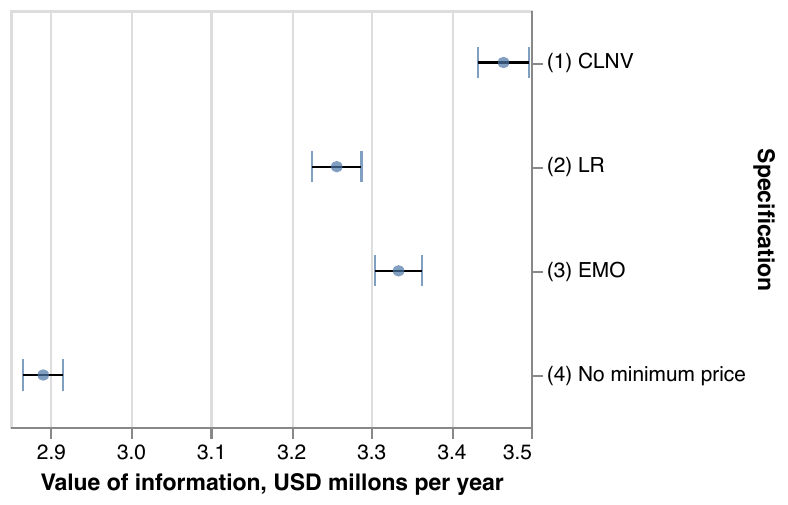}
\end{center}
\par
{\footnotesize Notes: Reported is the mean value of information across days and stocks
for \signalgos. \nominprice \idiovar \infoval \winsor 
}
\end{figure}

\begin{figure}[b!]
\caption{\label{fig:robust_Vw_frequency}Robustness to the frequency of observation}
\begin{center}
\includegraphics[width=0.70\textwidth]{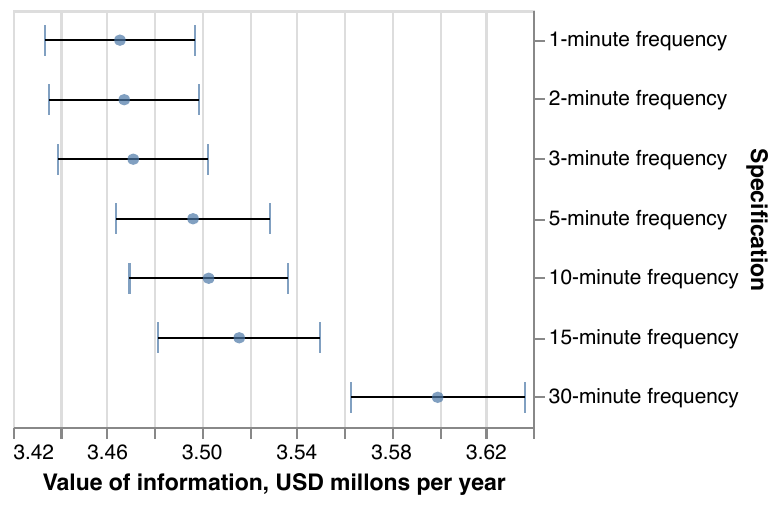}
\end{center}
\par
{\footnotesize Notes: Reported is the mean value of information across days and stocks
for various choices for the frequency of observation. \infoval \winsor 
}
\end{figure}

\begin{figure}
\caption{Robustness: Value of information and stock characteristics}
\label{fig:robust_regcoef}
\begin{center}
\begin{subfigure}[t]{0.49\columnwidth}
\includegraphics[width=1\textwidth]{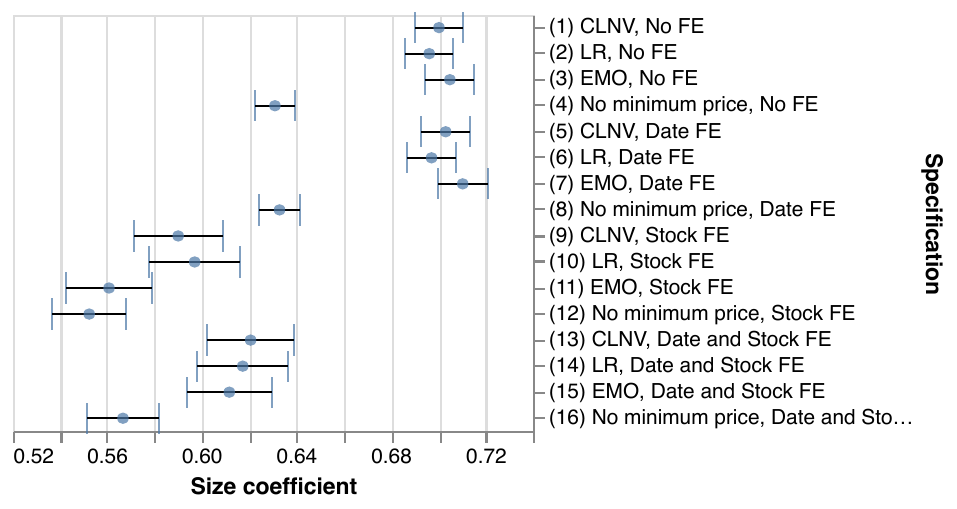}
\caption{\label{fig:robust_regcoef_lnwtw}Size}
\end{subfigure}
\begin{subfigure}[t]{0.49\columnwidth}
\includegraphics[width=1\textwidth]{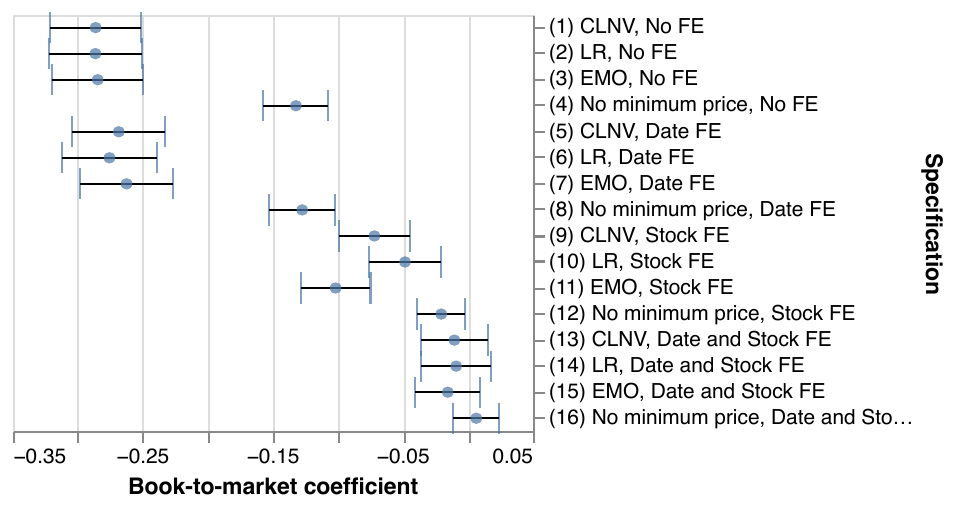}
\caption{\label{fig:robust_regcoef_bemew}Book-to-market}
\end{subfigure}
\par
\begin{subfigure}[t]{0.49\columnwidth}
\includegraphics[width=1\textwidth]{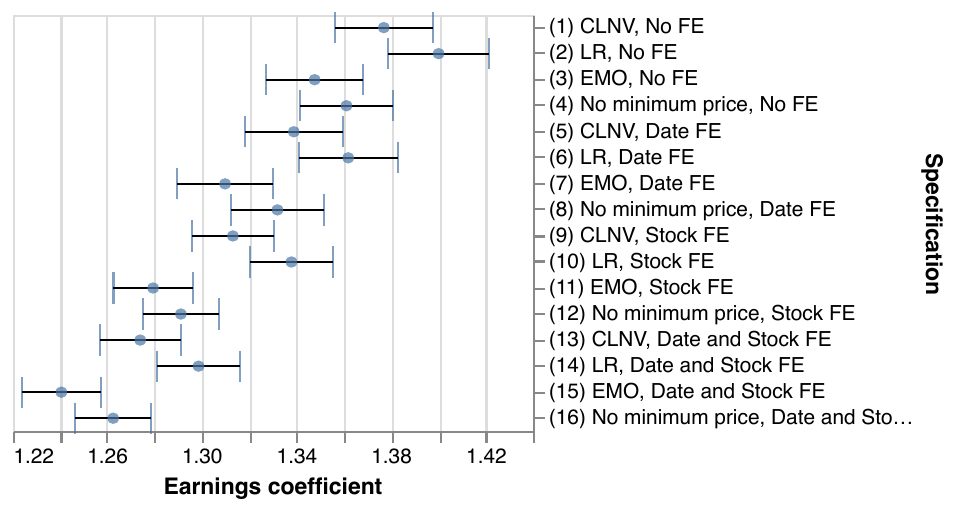}
\caption{\label{fig:robust_regcoef_ead}Earnings} 
\end{subfigure}
\begin{subfigure}[t]{0.49\columnwidth}
\includegraphics[width=1\textwidth]{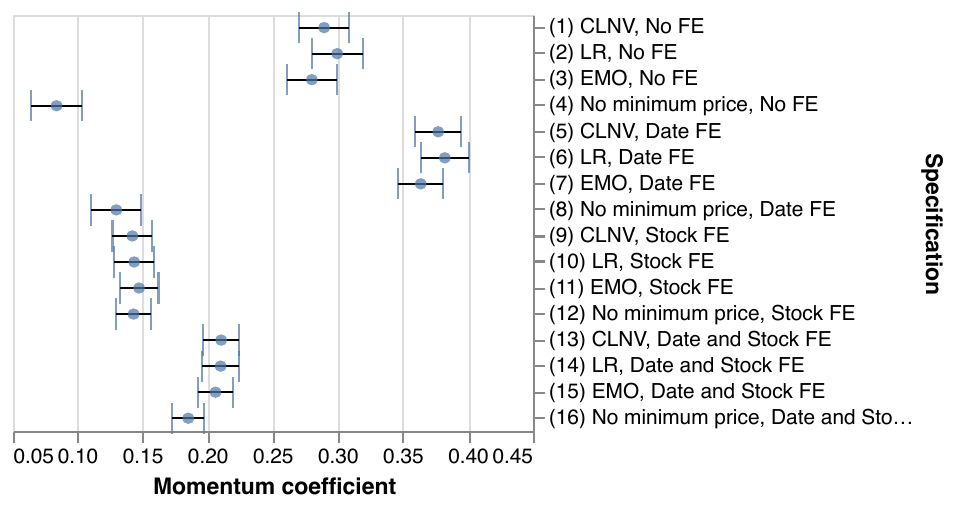}
\caption{\label{fig:robust_regcoef_momretw}Momentum}
\end{subfigure}
\end{center} 
\par
{\footnotesize Notes: Reported are coefficients from regressions of the log value of
information on stock characteristics. We evaluate several fixed effect specifications and \signalgos. \nominprice \size %
\momret \beme \neglambda \stderrorci
}
\end{figure}

\clearpage

\subsection{\label{sec:french_replication}Replication and extension of French (2008)}

\textcite{french_presidential_2008} measures the cost of active investing in US equities as the difference between what investors pay for their actual portfolios and what they would pay if all equity were held in a passive market portfolio. He combines the allocation of US common equity across investor types with the fees and expenses of each type and the US-equity trading revenue of broker-dealers, each scaled by aggregate market capitalization, over 1980--2006. We replicate his calculation and extend every component through 2024; Figure \ref{fig:french2008_cost_of_active} plots the resulting series.

For 1980--2006 we use French's published tables, so our series reproduces his results exactly and the incremental cost of active averages \meanincrcostactivepctfrenchsample\% of market capitalization. We verify the inputs we can observe independently: aggregate market capitalization recomputed from CRSP (NYSE, Amex, and Nasdaq common stocks, averaging the twelve beginning-of-month values each year) matches his within 0.6\% every year; the mutual fund allocation rebuilt from the current Z.1 Financial Accounts matches his Table I within 4.4 percentage points; and broker-dealer commissions and trading gains reconstructed from archived SEC FOCUS aggregates match his Table V within 15\% in 21 of 27 years.

For 2007--2024 we rebuild each component from public sources as explained below. Where French's source is proprietary, we anchor to his last published value and trend it on a closely related public series. Figures \ref{fig:french2008_active_cost_breakdown} and \ref{fig:french2008_passive_cost_breakdown} plot the resulting costs of active and passive investing by component; the incremental cost in Figure \ref{fig:french2008_cost_of_active} is their difference.

The mutual fund, closed-end fund, and ETF share of US corporate equity comes from the Z.1 Financial Accounts (Table L.224). The institutional share---private defined benefit and defined contribution plans, public pensions, and insurance companies---comes from Z.1 tables L.118 and L.224, plus hedge fund assets from HFR. The 2018 Z.1 redesign folded ESOPs, nonprofits, and bank personal trusts into the household sector, so our institutional share undercounts French's definition; restoring these sectors would raise the 2007--2024 incremental cost by at most 1.5 basis points.

Mutual fund fees are the value-weighted expense ratio plus annualized loads of US equity funds in the CRSP survivor-bias-free mutual fund database; the expense ratio falls from 78 basis points in 2007 to 29 in 2024, and annualized loads fall from 25 basis points to nearly zero. The institutional fee scales French's 2006 CEM Benchmarking value (23 basis points) by the ICI asset-weighted expense ratio of active equity funds, declining to 15 basis points by 2024.

Hedge fund assets come from HFR year-end industry reports. The realized hedge fund fee---management plus performance---is the fitted value from a regression of French's 1997--2007 realized fees on Barclay Hedge Fund Index returns ($R^{2}=0.85$), projected on returns measured relative to the index's high-water mark and floored at the 1.16\% average management fee. The realized fee is therefore cyclical: it falls to the management floor in 2008 and toward it in 2011, 2018, and 2022. The fund-of-funds share scales French's 2007 value (0.140) by the BarclayHedge fund-of-funds share, falling to 0.016 by 2024 as fund-of-funds intermediation collapsed. The US-equity share of hedge fund assets stays at French's 0.313: the equity-strategy share of hedge funds fell by a quarter while the US weight in developed equity markets rose by half, and the two trends offset.

US-equity trading revenue---commissions plus market-making gains---follows French's application of \textcite{stoll_equity_1993}. Through 2016, commissions and trading gains come from the SEC's published FOCUS aggregates. Thereafter only total broker-dealer revenue is public (FINRA), and we apply a commissions-plus-trading-gains share interpolated between the 2016 FOCUS value (0.19) and the SEC staff's published 2024 value (0.17). The US-equity fraction of this revenue is 0.4242, the mean fraction implied by French's Table V over 2002--2006; the implied fraction is trendless over 1980--2006 (range 0.33--0.55), so a constant is the appropriate form. Trading gains in 2008 are floored at zero because that year's industry-wide loss reflects fixed-income write-downs outside French's US-equity scope.

The passive benchmark follows French. The passive mutual fund cost is the expense ratio of the highest-fee live share class of the Vanguard Total Stock Market Index Fund from its SEC prospectuses, which declines from about 20 basis points in the 1990s to 14 basis points from 2017 onward. The passive institutional cost scales French's 2006 CEM passive value (2.9 basis points) by the Vanguard Institutional Index expense ratio, plus French's defined-contribution premium. The passive trading cost is the actual trading cost times the ratio of French's 10\% passive turnover to realized market turnover (annual dollar volume over market capitalization, from SIFMA and CRSP). Hedge fund assets are reallocated proportionally to the other investor types. The passive total falls from 18 basis points in 1980 to 5.5 in 2024 (Figure \ref{fig:french2008_passive_cost_breakdown}); trading costs drive the early decline, and from the 2000s onward the total is mostly the index-fund fee applied to the mutual fund allocation.

The incremental cost of active averages 46 basis points over 2007--2024. It is cyclical around the financial crisis---41 basis points in 2008, when trading gains and performance fees collapse, and 82 in 2009---and declines thereafter to 26 basis points in 2024. The composition shifts as well: trading costs are two-thirds of the active total in 1980, mutual fund fees dominate by the 2010s, and by 2024 mutual fund fees, hedge fund fees, and trading costs contribute roughly ten basis points each (Figure \ref{fig:french2008_active_cost_breakdown}). The dominant uncertainty is the US-equity fraction of trading revenue: varying it over its implied range moves the 2024 estimate between 23 and 29 basis points. Two revenue sources French could not measure remain outside our series---brokers' interest income on client cash and payment for order flow---which together would add roughly 2--5 basis points in recent years. The dollar cost of price discovery grows from \$7 billion in 1980 to \$135 billion in 2024.

\begin{figure}
\caption{\label{fig:french2008_active_cost_breakdown}The cost of active investing by component, 1980--2024}
\begin{center}
\includegraphics[width=1\columnwidth]{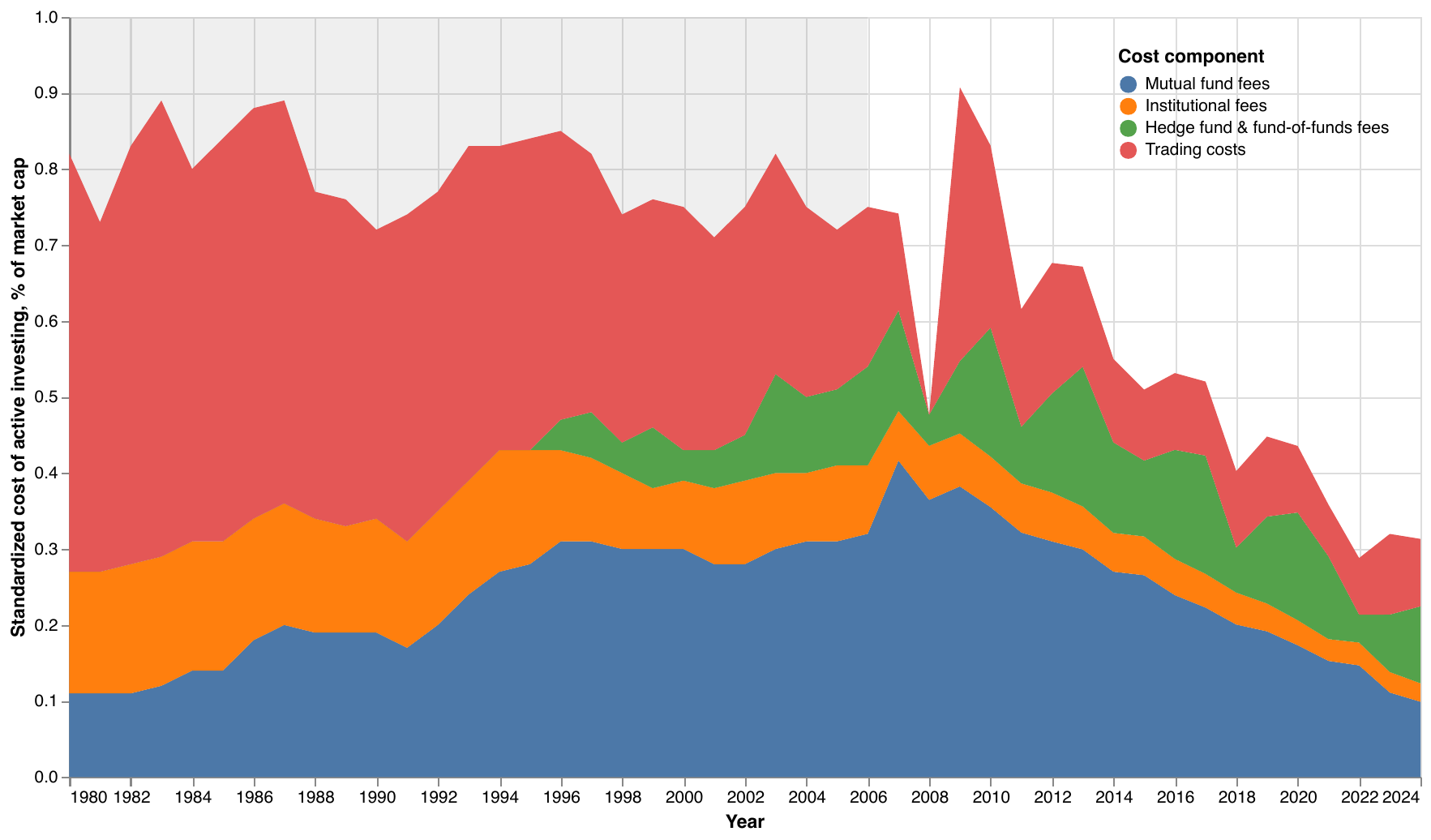}
\end{center}
\par
{\footnotesize Notes: Plotted is the standardized cost of active investing---the cost of actual investor portfolios as a percentage of US equity market capitalization---decomposed into mutual fund fees and loads, institutional investment-management fees, hedge fund and fund-of-funds fees, and US-equity trading costs. The shaded region marks French's original 1980--2006 sample.}
\end{figure}

\begin{figure}
  \caption{\label{fig:french2008_passive_cost_breakdown}The cost of passive investing by component, 1980--2024}
  \begin{center}
  \includegraphics[width=1\columnwidth]{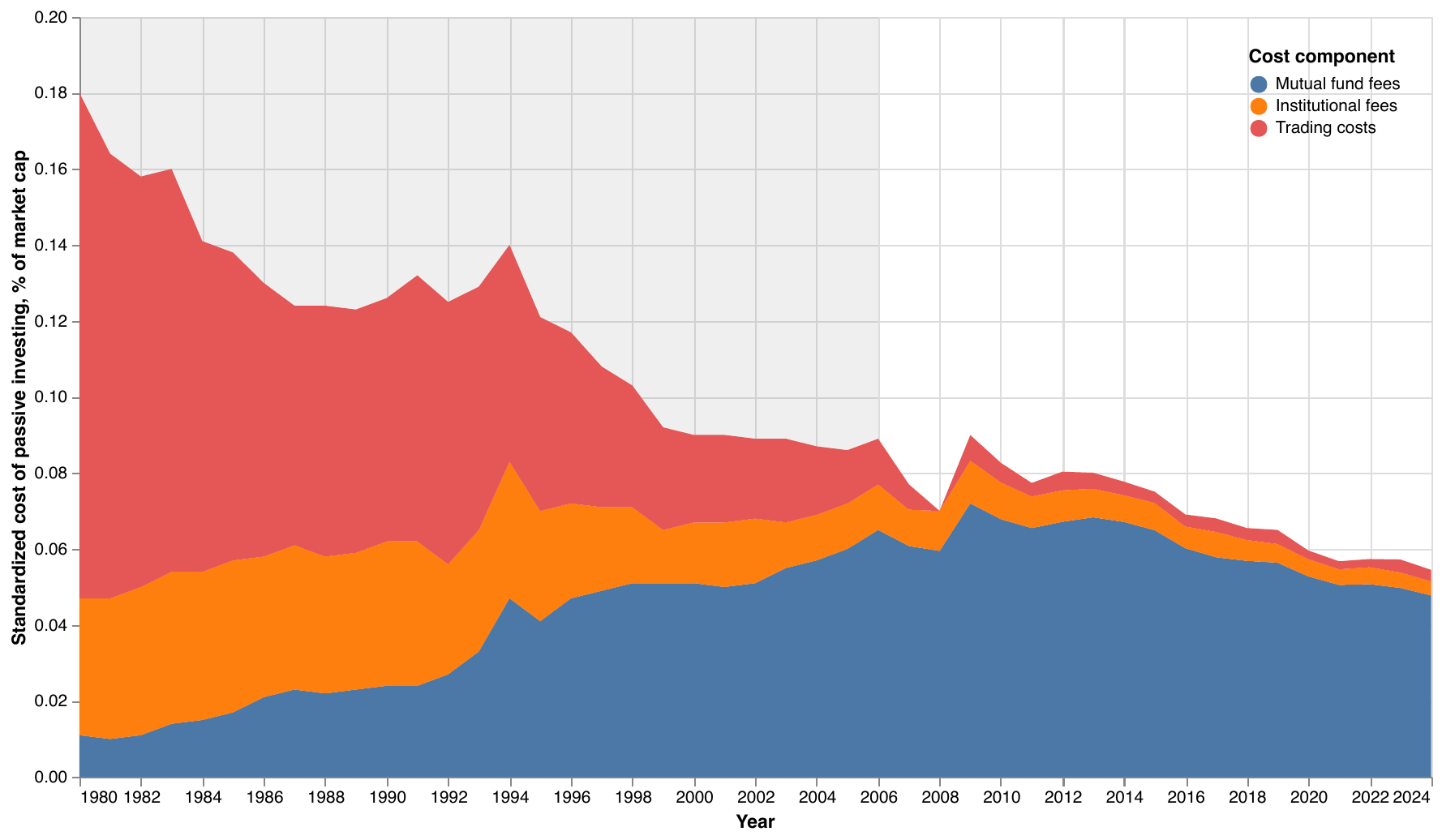}
  \end{center}
  \par
  {\footnotesize Notes: Plotted is the standardized cost of the passive benchmark portfolio as a percentage of US equity market capitalization, decomposed into mutual fund fees (the highest-fee live share class of the Vanguard Total Stock Market Index Fund), institutional passive-management costs, and trading costs at 10\% annual turnover. The passive scenario reallocates hedge fund assets proportionally, so it has no hedge fund component. The shaded region marks French's original 1980--2006 sample.}
\end{figure}
  
\end{document}

%% file: abcommonheader.tex
\usepackage{amsmath}
\usepackage{float}
\usepackage{tikz}
\usepackage{pgffor}
\usepackage{caption}
\usepackage{subcaption}
\usepackage{booktabs}
\usepackage{pdflscape}
\usepackage{xstring}

\usepackage{tgpagella}

\usepackage[T1]{fontenc}
\@ifpackageloaded{inputenc}{}{\usepackage[utf8]{inputenc}}
\usepackage{color}
\usepackage{amsthm}
\usepackage{pdflscape}
\usepackage{setspace}
\@ifpackageloaded{hyperref}{}{%
  \usepackage[unicode=true,
   bookmarks=true,bookmarksnumbered=false,bookmarksopen=false,
   breaklinks=false,pdfborder={0 0 1},backref=false,colorlinks=true]
   {hyperref}%
}
\hypersetup{pdftitle={The Value of Information: A Puzzle},
 pdfauthor={Ohad Kadan and Asaf Manela},
 linkcolor=blue!50!black,citecolor=blue!50!black}

\usepackage[textsize=tiny]{todonotes}

\usepackage{latexml} %
\iflatexml
  \usepackage[style=authoryear,
  backend=biber,
  giveninits=true,
  uniquelist=false,
  uniquename=init,
  isbn=false,
  maxcitenames=3,
  dashed=false,
  maxbibnames=999,
  doi=false,
  url=false,
  eprint=false]{biblatex}
\else
  \input{biblatex-aer}
\fi

\usetikzlibrary{shapes,arrows,positioning,chains}

\newcommand{\baseapproach}{aflam_f1_}
\newcommand{\basedir}{dCLNV_vvof_m5_t0_i1} 
\input{tables/noteableresults_\baseapproach\basedir.tex}

\newcommand{\cpibaseline}{December 2024 }
\newcommand{\inflation}{Dollar figures are CPI adjusted to \cpibaseline values. }
\newcommand{\startyyyymm}{200309}
\newcommand{\lastyyyymm}{202412}
\newcommand{\sampleperiod}{September 2003 to December 2024}
\newcommand{\dailysampleperiod}{September 11, 2003 to December 31, 2024}

\newcommand{\infoval}{The value of information is the annualized sum of the product of share price changes times signed share order flow and reported in millions of dollars. }
\newcommand{\infovallambdasigmay}{The second line provides an alternative estimator for the value of information that is annualized price impact
time order flow variance, which requires an additional assumption of linearity between price changes and signed order flow. }

\newcommand{\vofp}{Order flow volatility is the square root of the annualized one-minute signed share order flow variance. }
\newcommand{\prcimp}{Price impact is estimated by regressing one-minute log returns on contemporaneous share order flow and divided by the previous trading day's closing stock price. }
\newcommand{\size}{Size is log market equity over the previous month. }
\newcommand{\beme}{Book-to-market is book equity as of last June divided by market equity as of last December (Fama-French conventions). }
\newcommand{\momret}{Momentum is the return over the prior 2--12 months. }
\newcommand{\earnings}{Earnings indicates an earnings announcement day. }
\newcommand{\winsor}{All variables are 1\% winsorized. }
\newcommand{\shades}{Shades indicate recessions. }
\newcommand{\loginfovalterms}{The log information value equals log price impact plus log annualized order flow variance, 
so we also report the mean of log price impact (dotted line) and the mean of log annualized return variance (dashed line). }
\newcommand{\loginfovalregs}{The log information value equals log price impact plus log annualized order flow variance, 
so in Panel (B) we regress log price impact and log annualized order flow variance on the same variables. }

\newcommand{\dailysolid}{Solid line is the daily information value averaged over stocks surrounded by its 95 percent confidence interval, based on time-clustered variance estimates. }
\newcommand{\neglambda}{Observations with negative price impact are omitted. }

\newcommand{\stderror}{Standard errors clustered by date and stock are in parentheses. }
\newcommand{\stderrorci}{Standard errors clustered by date and stock are used to construct the 95\% errors bars. }
\newcommand{\stars}{* $p<0.1$, ** $p<0.05$, *** $p<0.01$. }
\newcommand{\signalgos}{three different algorithms for signing trades as buys or sells:
    \textcite[CLNV]{chakrabarty_trade_2007}, \textcite[LR]{lee_inferring_1991}, and \textcite[EMO]{ellis_accuracy_2000}}
\newcommand{\nominprice}{No minimum price shows the effect of including stocks with nominal price below \$5. }

\newcommand{\idiovar}{Idiosyncratic variance shows the effect of subtracting market variance from stock-specific variance to measure the values of idiosyncratic information. }

\newcommand*{\paramcaption}[1]{%
  \IfStrEqCase{#1}{%
    {gamma}{relative risk aversion $\gamma$}%
    {EIS}{elasticity of intertemporal substitution $1/\rho$}%
    {beta}{time discount factor $\beta$}%
    {tau}{horizon in years}%
    {wtolexp}{optimization precision}%
    {nstates}{number of states $n$}%
    {dk}{state spacing $dk$}%
    {powerUgamma}{SDF exponent $\epsilon$}%
  }[{Unknown parameter #1}]%
}
\newcommand*{\streamcaption}[1]{%
  \IfStrEqCase{#1}{%
    {}{signal every period}%
    {_single}{one-time signal}%
    {_decomp}{signal every period (decomposition)}%
    {_single_decomp}{one-time signal (decomposition)}%
    {_single_obs}{Number of stable observations}%
    {_psychic}{signal every period (psychic)}%
    {_single_psychic}{one-time signal (psychic)}%
  }[{Unknown parameter #1}]%
}

\newcommand\unfootnote[1]{%
  \begingroup
  \renewcommand\footnotemark{}
  \renewcommand\footnoterule{}
  \renewcommand\thefootnote{}\footnote{#1}%
  \addtocounter{footnote}{-1}%
  \endgroup
}

\DeclareRobustCommand{\rvdots}{%
  \vbox{
    \baselineskip4\p@\lineskiplimit\z@
    \kern-\p@
    \hbox{.}\hbox{.}\hbox{.}
  }}

\newcommand{\sym}[1]{\rlap{#1}}%

%% file: biblatex-aer.tex
\usepackage[authordate, backend=biber, uniquename=false, noibid,
doi=false,
url=false,
eprint=false]{biblatex-chicago}

\DeclareFieldFormat{citehyperref}{%
  \DeclareFieldAlias{bibhyperref}{noformat}%
  \bibhyperref{#1}}

\DeclareFieldFormat{textcitehyperref}{%
  \DeclareFieldAlias{bibhyperref}{noformat}%
  \bibhyperref{%
    #1%
    \ifbool{cbx:parens}
      {\bibcloseparen\global\boolfalse{cbx:parens}}
      {}}}

\savebibmacro{cite}
\savebibmacro{textcite}

\renewbibmacro*{cite}{%
  \printtext[citehyperref]{%
    \restorebibmacro{cite}%
    \usebibmacro{cite}}}

\renewbibmacro*{textcite}{%
  \ifboolexpr{
    ( not test {\iffieldundef{prenote}} and
      test {\ifnumequal{\value{citecount}}{1}} )
    or
    ( not test {\iffieldundef{postnote}} and
      test {\ifnumequal{\value{citecount}}{\value{citetotal}}} )
  }
    {\DeclareFieldAlias{textcitehyperref}{noformat}}
    {}%
  \printtext[textcitehyperref]{%
    \restorebibmacro{textcite}%
    \usebibmacro{textcite}}}

\AtBeginBibliography{

}

%% file: tables/noteableresults_aflam_f1_dCLNV_vvof_m5_t0_i1.tex

\newcommand{\stdinfovalpctofmarketcapw}{0.02}

\newcommand{\medianpriceimpact}{0.01}

\newcommand{\EVwannualUSDm}{3.5}
\newcommand{\medianmompct}{10}
\newcommand{\meaninfovalpctofmarketcap}{0.04}

\newcommand{\stdinfovalpctofmarketcap}{0.03}
\newcommand{\medianbemepct}{50}
\newcommand{\erneffectonvaluebothfe}{1.27}

\newcommand{\meanincrcostactivepctvoisample}{0.50}

\newcommand{\meanmktcapUSDb}{10}
\newcommand{\meanincrcostactivepctextension}{0.46}

\newcommand{\medianmktcapUSDb}{1}

\newcommand{\meanernpct}{1}

\newcommand{\meanincrcostactivepctfrenchsample}{0.67}
\newcommand{\experneffectonvaluebothfe}{3.57}

\newcommand{\medianpriceimpactpct}{1}
\newcommand{\EVwdailyUSDapprox}{13,700}

\newcommand{\negativeVwpct}{4}

\newcommand{\nstocks}{8,300}
\newcommand{\StdclusVwannualUSDm}{62}

%% file: tables/reg_aflam_f1_dCLNV_vvof_m5_t0_i1_200309_202412_lndPdYw_on_lnwtw_bemew_momretw.tex
\begin{tabular*}{\textwidth}{l@{\extracolsep{\fill}}rrrr}
\toprule
               &             \multicolumn{4}{c}{log Information value}            \\ 
\cmidrule(lr){2-5} 
               &            (1) &            (2) &            (3) &           (4) \\ 
\midrule
Size           &  0.70\sym{***} &  0.70\sym{***} &  0.59\sym{***} & 0.62\sym{***} \\ 
               &         (0.01) &         (0.01) &         (0.01) &        (0.01) \\ 
Book-to-market & -0.29\sym{***} & -0.27\sym{***} & -0.07\sym{***} &         -0.01 \\ 
               &         (0.02) &         (0.02) &         (0.01) &        (0.01) \\ 
Momentum       &  0.29\sym{***} &  0.38\sym{***} &  0.14\sym{***} & 0.21\sym{***} \\ 
               &         (0.01) &         (0.01) &         (0.01) &        (0.01) \\ 
\midrule
Date (day) FE  &                &            Yes &                &           Yes \\ 
Stock FE       &                &                &            Yes &           Yes \\ 
\midrule
$N$            &     13,855,033 &     13,855,033 &     13,854,969 &    13,855,033 \\ 
$R^2$          &           0.52 &           0.55 &           0.65 &          0.68 \\ 
Within-$R^2$   &                &           0.54 &           0.10 &          0.10 \\ 
\bottomrule
\end{tabular*}

%% file: tables/reg_aflam_f1_dCLNV_vvof_m5_t0_i1_200309_202412_lnrvofpw_lnlampw_on_lnwtw_bemew_momretw.tex
\begin{tabular*}{\textwidth}{l@{\extracolsep{\fill}}rrrrrrrr}
\toprule
               &            \multicolumn{4}{c}{log Orderflow Variance}            &                \multicolumn{4}{c}{log Price impact}               \\ 
\cmidrule(lr){2-5} \cmidrule(lr){6-9} 
               &            (1) &            (2) &            (3) &           (4) &            (5) &            (6) &            (7) &            (8) \\ 
\midrule
Size           &  1.80\sym{***} &  1.82\sym{***} &  1.53\sym{***} & 1.63\sym{***} & -1.11\sym{***} & -1.12\sym{***} & -0.95\sym{***} & -1.02\sym{***} \\ 
               &         (0.01) &         (0.01) &         (0.01) &        (0.01) &         (0.00) &         (0.00) &         (0.01) &         (0.01) \\ 
Book-to-market & -0.34\sym{***} & -0.32\sym{***} & -0.17\sym{***} & -0.04\sym{**} &  0.06\sym{***} &  0.06\sym{***} &  0.10\sym{***} &  0.03\sym{***} \\ 
               &         (0.03) &         (0.03) &         (0.02) &        (0.02) &         (0.01) &         (0.01) &         (0.01) &         (0.01) \\ 
Momentum       &  0.45\sym{***} &  0.46\sym{***} &  0.29\sym{***} & 0.27\sym{***} & -0.16\sym{***} & -0.08\sym{***} & -0.15\sym{***} & -0.06\sym{***} \\ 
               &         (0.01) &         (0.01) &         (0.01) &        (0.01) &         (0.01) &         (0.01) &         (0.01) &         (0.00) \\ 
\midrule
Date (day) FE  &                &            Yes &                &           Yes &                &            Yes &                &            Yes \\ 
Stock FE       &                &                &            Yes &           Yes &                &                &            Yes &            Yes \\ 
\midrule
$N$            &     13,855,033 &     13,855,033 &     13,854,969 &    13,855,033 &     13,855,033 &     13,855,033 &     13,854,969 &     13,855,033 \\ 
$R^2$          &           0.76 &           0.77 &           0.83 &          0.84 &           0.69 &           0.71 &           0.74 &           0.76 \\ 
Within-$R^2$   &                &           0.77 &           0.26 &          0.24 &                &           0.71 &           0.17 &           0.16 \\ 
\bottomrule
\end{tabular*}

%% file: tables/reg_aflam_f1_dCLNV_vvof_m5_t0_i1_200309_202412_lndPdYw_on_ead_lnwtw_bemew_momretw.tex
\begin{tabular*}{\textwidth}{l@{\extracolsep{\fill}}rrrr}
\toprule
               &             \multicolumn{4}{c}{log Information value}            \\ 
\cmidrule(lr){2-5} 
               &            (1) &            (2) &            (3) &           (4) \\ 
\midrule
Earnings       &  1.38\sym{***} &  1.34\sym{***} &  1.31\sym{***} & 1.27\sym{***} \\ 
               &         (0.01) &         (0.01) &         (0.01) &        (0.01) \\ 
Size           &  0.70\sym{***} &  0.70\sym{***} &  0.59\sym{***} & 0.62\sym{***} \\ 
               &         (0.01) &         (0.01) &         (0.01) &        (0.01) \\ 
Book-to-market & -0.29\sym{***} & -0.27\sym{***} & -0.07\sym{***} &         -0.01 \\ 
               &         (0.02) &         (0.02) &         (0.01) &        (0.01) \\ 
Momentum       &  0.29\sym{***} &  0.38\sym{***} &  0.14\sym{***} & 0.21\sym{***} \\ 
               &         (0.01) &         (0.01) &         (0.01) &        (0.01) \\ 
\midrule
Date (day) FE  &                &            Yes &                &           Yes \\ 
Stock FE       &                &                &            Yes &           Yes \\ 
\midrule
$N$            &     13,855,033 &     13,855,033 &     13,854,969 &    13,855,033 \\ 
$R^2$          &           0.53 &           0.56 &           0.66 &          0.69 \\ 
Within-$R^2$   &                &           0.54 &           0.12 &          0.12 \\ 
\bottomrule
\end{tabular*}

%% file: tables/reg_aflam_f1_dCLNV_vvof_m5_t0_i1_200309_202412_lnrvofpw_lnlampw_on_ead_lnwtw_bemew_momretw.tex
\begin{tabular*}{\textwidth}{l@{\extracolsep{\fill}}rrrrrrrr}
\toprule
               &            \multicolumn{4}{c}{log Orderflow Variance}            &                \multicolumn{4}{c}{log Price impact}               \\ 
\cmidrule(lr){2-5} \cmidrule(lr){6-9} 
               &            (1) &            (2) &            (3) &           (4) &            (5) &            (6) &            (7) &            (8) \\ 
\midrule
Earnings       &  1.78\sym{***} &  1.76\sym{***} &  1.67\sym{***} & 1.64\sym{***} & -0.38\sym{***} & -0.40\sym{***} & -0.33\sym{***} & -0.35\sym{***} \\ 
               &         (0.01) &         (0.01) &         (0.01) &        (0.01) &         (0.01) &         (0.01) &         (0.01) &         (0.01) \\ 
Size           &  1.79\sym{***} &  1.81\sym{***} &  1.53\sym{***} & 1.63\sym{***} & -1.11\sym{***} & -1.12\sym{***} & -0.95\sym{***} & -1.02\sym{***} \\ 
               &         (0.01) &         (0.01) &         (0.01) &        (0.01) &         (0.00) &         (0.00) &         (0.01) &         (0.01) \\ 
Book-to-market & -0.34\sym{***} & -0.32\sym{***} & -0.17\sym{***} & -0.04\sym{**} &  0.06\sym{***} &  0.06\sym{***} &  0.10\sym{***} &  0.03\sym{***} \\ 
               &         (0.03) &         (0.03) &         (0.02) &        (0.02) &         (0.01) &         (0.01) &         (0.01) &         (0.01) \\ 
Momentum       &  0.45\sym{***} &  0.46\sym{***} &  0.29\sym{***} & 0.27\sym{***} & -0.16\sym{***} & -0.08\sym{***} & -0.15\sym{***} & -0.06\sym{***} \\ 
               &         (0.01) &         (0.01) &         (0.01) &        (0.01) &         (0.01) &         (0.01) &         (0.01) &         (0.00) \\ 
\midrule
Date (day) FE  &                &            Yes &                &           Yes &                &            Yes &                &            Yes \\ 
Stock FE       &                &                &            Yes &           Yes &                &                &            Yes &            Yes \\ 
\midrule
$N$            &     13,855,033 &     13,855,033 &     13,854,969 &    13,855,033 &     13,855,033 &     13,855,033 &     13,854,969 &     13,855,033 \\ 
$R^2$          &           0.76 &           0.77 &           0.84 &          0.84 &           0.69 &           0.72 &           0.74 &           0.76 \\ 
Within-$R^2$   &                &           0.77 &           0.27 &          0.25 &                &           0.71 &           0.17 &           0.16 \\ 
\bottomrule
\end{tabular*}

%% file: tables/validation_ITI_on_lndPdYw_aflam_f1_dCLNV_vvof_m5_t0_i1_200309_201907.tex
\begin{tabular*}{\textwidth}{l@{\extracolsep{\fill}}rrrrr}
\toprule
                   & \multicolumn{1}{c}{ITI (13D)} & \multicolumn{1}{c}{ITI (impatient)} & \multicolumn{1}{c}{ITI (patient)} & \multicolumn{1}{c}{ITI (insider)} & \multicolumn{1}{c}{ITI (short)} \\ 
\cmidrule(lr){2-2} \cmidrule(lr){3-3} \cmidrule(lr){4-4} \cmidrule(lr){5-5} \cmidrule(lr){6-6} 
                   &                           (1) &                                 (2) &                               (3) &                               (4) &                             (5) \\ 
\midrule
log Info value (z) &                0.383\sym{***} &                      0.561\sym{***} &                    0.298\sym{***} &                    0.219\sym{***} &                  0.546\sym{***} \\ 
                   &                       (0.003) &                             (0.004) &                           (0.003) &                           (0.003) &                         (0.004) \\ 
\midrule
Date (day) FE      &                           Yes &                                 Yes &                               Yes &                               Yes &                             Yes \\ 
Stock FE           &                           Yes &                                 Yes &                               Yes &                               Yes &                             Yes \\ 
\midrule
$N$                &                     9,292,557 &                           9,292,557 &                         9,292,557 &                         9,013,781 &                       9,136,164 \\ 
$R^2$              &                         0.100 &                               0.188 &                             0.069 &                             0.073 &                           0.228 \\ 
Within-$R^2$       &                         0.043 &                               0.097 &                             0.026 &                             0.014 &                           0.097 \\ 
\bottomrule
\end{tabular*}

%% file: tables/validation_aggval_on_ARA_subsamples_aflam_f1_dCLNV_vvof_m5_t0_i1_20040701_20191231.tex
\begin{tabular*}{1\columnwidth}{@{\hskip\tabcolsep\extracolsep\fill}l*{5}{r}}
\toprule
 & \multicolumn{1}{c}{Full} & \multicolumn{1}{c}{2004--2008} & \multicolumn{1}{c}{2009--2013} & \multicolumn{1}{c}{2014--2019} & \multicolumn{1}{c}{Excl.\ 2008--09} \\
\midrule
\multicolumn{6}{l}{\emph{Panel A (Levels): log dPdY}} \\
ARA (1 s.d.) & $0.140\sym{***}$ & $0.178\sym{***}$ & $0.086\sym{***}$ & $0.111\sym{***}$ & $0.146\sym{***}$ \\
 & (0.009) & (0.025) & (0.025) & (0.009) & (0.009) \\
Observations & 3,900 & 1,135 & 1,255 & 1,510 & 2,513 \\
$R^2$ & 0.232 & 0.230 & 0.059 & 0.270 & 0.277 \\
\midrule
\multicolumn{6}{l}{\emph{Panel B (First differences): $\Delta$ log dPdY}} \\
ARA (1 s.d.) & $0.059\sym{***}$ & $0.070\sym{***}$ & $0.076\sym{***}$ & $0.044\sym{***}$ & $0.053\sym{***}$ \\
 & (0.004) & (0.009) & (0.007) & (0.005) & (0.004) \\
Observations & 3,899 & 1,134 & 1,254 & 1,509 & 2,512 \\
$R^2$ & 0.127 & 0.141 & 0.139 & 0.119 & 0.120 \\
\bottomrule
\end{tabular*}